\newtheorem{theorem}{Theorem}
\newtheorem{lemma}{Lemma}
\newtheorem{corollary}{Corollary}
\theoremstyle{definition}
\newtheorem{definition}{Definition}
\newtheorem{example}{Example}
\newtheorem{fact}{Fact}
\newcommand{\Prefix}{\mathsf{Prefix}}
\newcommand{\Substr}{\mathsf{Substr}}
\newcommand{\Suffix}{\mathsf{Suffix}}
\newcommand{\CDAWG}{\mathsf{CDAWG}}
\newcommand{\grammar}{\mathsf{G_{CDAWG}}}
\newcommand{\EndPos}{\mathsf{EndPos}}
\newcommand{\BegPos}{\mathsf{BegPos}}
\newcommand{\EqrL}{\equiv^{\mathrm{L}}}
\newcommand{\EqrR}{\equiv^{\mathrm{R}}}
\newcommand{\LeftM}{\mathsf{LeftM}}
\newcommand{\RightM}{\mathsf{RightM}}
\newcommand{\M}{\mathsf{M}}
\newcommand{\MR}{\mathsf{MR}}
\newcommand{\D}{\mathsf{d}}
\newcommand{\G}{\mathsf{C}}
\newcommand{\size}{\mathsf{e}}
\newcommand{\Gsize}{\mathsf{g}}
\newcommand{\vone}{\mathrm{v}^{(1)}}
\newcommand{\I}{\mathsf{I}}
\newcommand{\remedge}{\mathsf{rem\_e}}
\newcommand{\AS}{\mathsf{AS}}
\newcommand{\Q}{\mathsf{Q}}
\newcommand{\N}{\mathsf{N}}
\newcommand{\None}{\mathsf{N}_1}
\newcommand{\Ntwo}{\mathsf{N}_2}
\newcommand{\Nthree}{\mathsf{N}_3}
\newcommand{\Nadd}{\mathsf{N}_{3\mathrm{B}}}
\newcommand{\Nbase}{\mathsf{N}_{3\mathrm{A}}}
\newcommand{\uc}{\mathcal{J}}
\newcommand{\wc}{\mathcal{L}}
\newcommand{\VO}{\mathsf{V}_{\mathrm{1}}}
\newcommand{\VA}{\mathsf{V}_{\mathrm{2(a)}}}
\newcommand{\VB}{\mathsf{V}_{\mathrm{2(b)}}}
\newcommand{\VC}{\mathsf{V}_{\mathrm{2(c)}}}
\newcommand{\Vbt}{\mathsf{V}_{\mathrm{2(c)-1}}}
\newcommand{\Val}{\mathsf{V}_{\mathrm{2(c)-2}}}
\newcommand{\Valy}{\mathsf{V}^{(y)}_{\mathrm{2(c)-2}}}
\newcommand{\Valz}{\mathsf{V}^{(z)}_{\mathrm{2(c)-2}}}
\newcommand{\Sy}{\mathit{D}}
\newcommand{\Xy}{\mathit{X}}
\newcommand{\Snoty}{\hat{\mathit{D}}}
\newcommand{\zl}{z_l(x)}
\newcommand{\zly}{z_l(x_y)}
\newcommand{\rrep}{\mathsf{rexp}}
\newcommand{\Ndelta}{\delta_\mathsf{N}}
\newcommand{\Vdelta}{\delta_\mathsf{V}}
\begin{document}

\title{On the sensitivity of CDAWG-grammars}

\author[1]{Hiroto~Fujimaru}
\author[2]{Shunsuke~Inenaga}

\affil[1]{Department of Information Science and Technology
{\tt fujimaru.hiroto.134@s.kyushu-u.ac.jp}}

\affil[2]{Department of Informatics, Kyushu University, Japan
{\tt inenaga.shunsuke.380@m.kyushu-u.ac.jp}}

\date{}
\maketitle

\begin{abstract}
The \emph{compact directed acyclic word graph} (\emph{CDAWG})
    [Blumer et al. 1987] of a string is the minimal compact automaton
    that recognizes all the suffixes of the string.
    CDAWGs can be used for various string tasks
    including text pattern searching, data compression, and pattern discovery.
    The \emph{CDAWG-grammar} [Belazzougui \& Cunial 2017] is a grammar-based text compression based on the CDAWG,
    which allows for representing the CDAWG in $O(\size)$ space
    without storing the string, where $\size$ denotes the number of CDAWG edges.
    Let $\Gsize$ be the size of the CDAWG-grammar for the input string $T$.
    We show that the worst-case additive sensitivity of the CDAWG-grammar
    is lower bounded by $3\Gsize-21$ and is upper bounded by $8 \Gsize + 4$.
\end{abstract}


\section{Introduction}
\label{sec:intro}

The \emph{compact directed acyclic word graph} (\emph{CDAWG})~\cite{Blumer1987} of a string is the minimal compact automaton
that recognizes all the suffixes of the string.
CDAWGs are known to be useful for various string tasks
including text pattern searching~\cite{Crochemore1997,Inenaga2005}, data compression~\cite{BelazzouguiC17_CPM,BelazzouguiC17_SPIRE,Takagi2017}, and pattern discovery~\cite{Takeda2000}.

CDAWGs can be obtained by merging isomorphic subtrees of suffix trees~\cite{Weiner1973}.
Due to this nature, as in the case of suffix trees of which each edge string label $x$ is represented by a pair $(i,j)$ of positions satisfying $T[i..j] = x$,
classical implementations of CDAWGs also require
$\Theta(n)$ space to explicitly store a copy of the string $T$ being indexed, where $n = |T|$.
This significantly limits the size of strings on which the CDAWGs are built.
Still, the number $\size$ of the edges in the CDAWG tends to be small,
for some highly repetitive strings,
in particular, $\size = O(\log n)$ holds for the standard Sturmian words and Thue-Morse words~\cite{Rytter06,BaturoPR09,RadoszewskiR12}.

The task of representing the CDAWG with $O(\size)$ space was first achieved by Belazzougui and Cunial~\cite{BelazzouguiC17_CPM,BelazzouguiC17_SPIRE}, via a connection from CDAWGs to \emph{grammar-based compressions}.
They observed that the reversed CDAWG induces a context-free grammar (CFG)
that only generates the input string $T$, which is hereafter named as the \emph{CDAWG-grammar}.
By augmenting the CDAWG-grammar with a constant-time level ancestor data structure,
the CDAWG can be stored in $O(\size)$ space while retaining the ability to perform optimal $O(m+occ)$-time pattern matching, where $m$ is the pattern length and $occ$ is the number of occurrences to report~\cite{BelazzouguiC17_CPM,BelazzouguiC17_SPIRE}.
While the CDAWG-grammar can easily be obtained via the graph topology of the corresponding CDAWG,
it is also possible to build the CDAWG-grammar \emph{directly} from suffix-array-based structures~\cite{AlanJ2023}, without the need of explicitly building the CDAWG.




The focus of this paper is to analyze the advantage of CDAWG-grammar
in terms of \emph{sensitivity}:
The sensitivity of a repetitiveness measure $\mathsf{c}$ asks how much the measure size increases when a single-character edit operation is performed on the input string~\cite{AkagiFI2023}, 
and thus the sensitivity allows one to evaluate the robustness of the measure/compressor against errors/edits.
%
Recent works have revealed the sensitivity of CDAWG size $\size$
in the case of left-end edit operations~\cite{FujimaruNI25}
as well as in the general case~\cite{HamaiFI2025}.
As for the CDAWG-grammar size $\Gsize$,
it is known that $\Gsize  = \size - \vone$ holds~\cite{BelazzouguiC17_CPM,BelazzouguiC17_SPIRE}, where $\vone$ denotes the number of CDAWG nodes of in-degree one.
To obtain non-trivial bounds for the worst-case additive sensitivity
of the CDAWG-grammar size $\Gsize$,
we present new properties of $\vone$ which were not previously well understood.

In this paper, we first provide a lower bound of $3\Gsize - 21$.
After presenting a straightforward upper bound of $15 \Gsize + 4$,
we give a non-trivial tighter upper bound of $8 \Gsize + 4$
based on combinatorial properties of CDAWGs and CDAWG-grammars.
Fig.~\ref{fig:roadmap} shows a roadmap for proving our upper bound $8 \Gsize + 4$.
Our bounds for the worst-case additive sensitivity
of the CDAWG-grammar are optimal up to a constant factor.

\begin{figure}[H]
    \centering
    \includegraphics[width=1.0\textwidth]{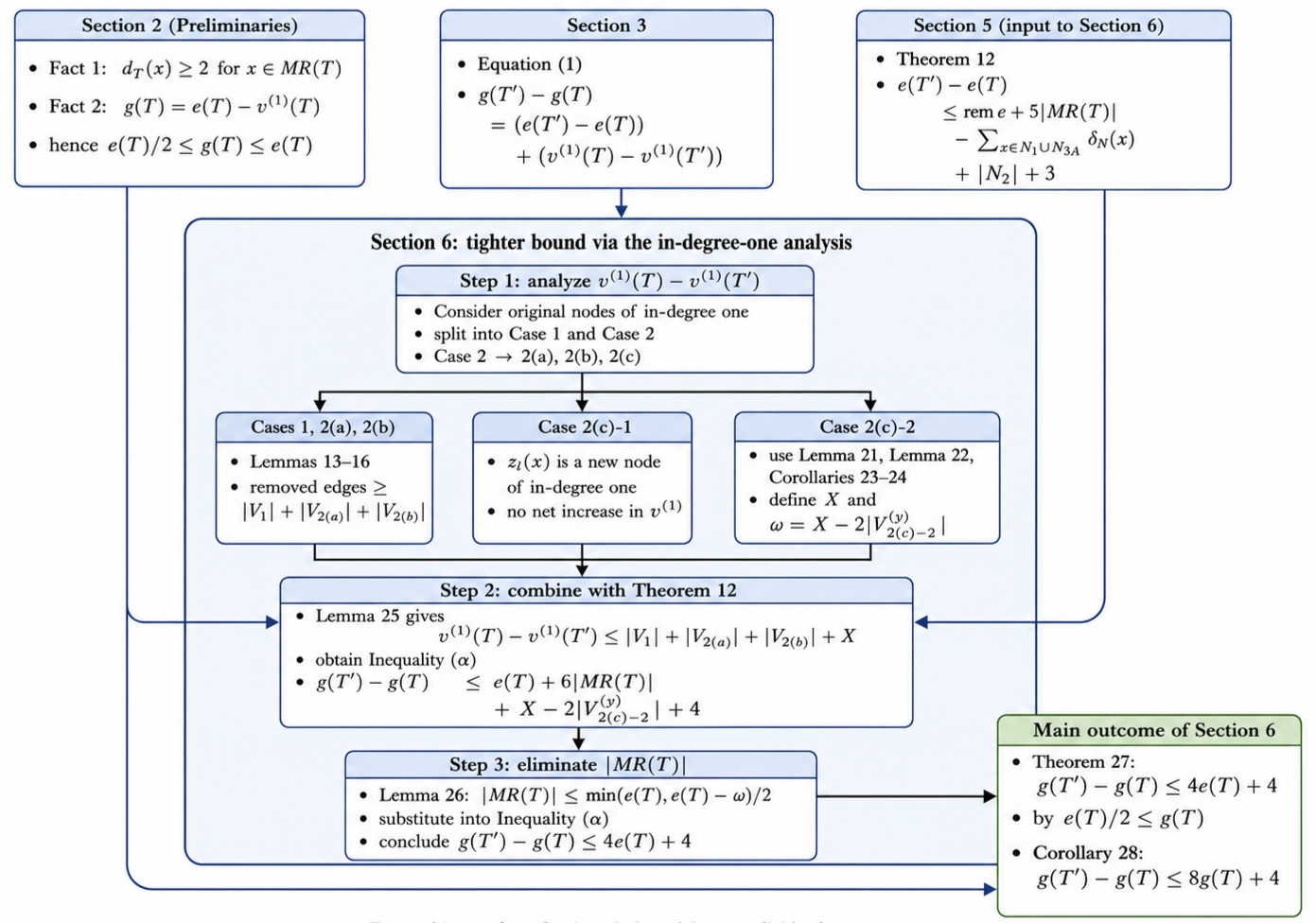}
    \caption{The roadmap for the $8 \Gsize + 4$ upper bound
    on the worst-case additive sensitivity of the CDAWG-grammar size $\Gsize$.}
    \label{fig:roadmap}
\end{figure}





\section{Preliminaries}

\noindent \textbf{Strings.}
Let $\Sigma$ be an \emph{alphabet} of size $\sigma$.
An element of $\Sigma^*$ is called a \emph{string}.
For a string $T \in \Sigma^*$, the length of $T$ is denoted by $|T|$.
The \emph{empty string}, denoted by $\varepsilon$, is the string of length $0$.
Let $\Sigma^+ = \Sigma^* \setminus \{\varepsilon\}$.
If $T = uvw$, then $u$, $v$, and $w$ are called a \emph{prefix}, \emph{substring},
and \emph{suffix} of $T$, respectively.
The sets of prefixes, substrings, and suffixes of string $T$ are denoted by
$\Prefix(T)$, $\Substr(T)$, and $\Suffix(T)$, respectively.
For a string $T$ of length $n$, $T[i]$ denotes the $i$th character of $T$
for $1 \leq i \leq n$,
and $T[i..j] = T[i] \cdots T[j]$ denotes the substring of $T$ that begins at position $i$ and ends at position $j$ on $T$ for $1 \leq i \leq j \leq n$.
\hfnote*{modified}{%
 For $0 \leq k \leq |T|$, let $\varepsilon = T[k+1..k]$.
}

\noindent \textbf{Maximal substrings.}
For two strings $u$ and $T$,
let $\BegPos(u, T) = \{i \mid T[i..i+|u|-1] = u\}$ and
$\EndPos(u, T) = \{i \mid T[i-|u|+1..i] = u\}$ denote
the sets of beginning and ending positions of $u$ in $T$,
respectively.
For any substrings $u, v \in \Substr(T)$ of a string $T$,
define $u \EqrL_T v \Leftrightarrow \EndPos(u, T) = \EndPos(v, T)$,
and $u \EqrR_T v \Leftrightarrow \BegPos(u, T) = \BegPos(v, T)$.
A substring $x \in \Substr(T)$ of $T$ is said to be
\emph{left-maximal} in $T$ if
$ax, bx \in \Substr(T)$ for some two distinct characters $a,b \in \Sigma$, or
$x \in \Prefix(T)$.
Similarly, $x$ is said to be \emph{right-maximal} in $T$ if
$xa, xb \in \Substr(T)$ for some two distinct characters $a,b \in \Sigma$, or
$x \in \Suffix(T)$.
Let $\LeftM(T)$ and $\RightM(T)$ denote the sets of left-maximal
and right-maximal substrings in $T$, respectively.
It is known that $x \in \LeftM(T)$ (resp. $x \in \RightM(T)$) iff
$x$ is the longest member of the equivalence class under $\EqrL_T$
(resp. under $\EqrR_T$).
Let $\mathsf{M}(T) = \LeftM(T) \cap \RightM(T)$.
Any element of $\mathsf{M}(T)$ is said to be \emph{maximal} in $T$.

A string $x$ is said to be a repeat in a string $T$
if $|\BegPos(x, T)| = |\EndPos(x, T)|$ $\geq 2$.
Let $\MR(T) = \M(T) \setminus \{T\}$ denote the set of \emph{maximal repeats} in $T$.


For any substring $S$ of a string $T$,
we define its \emph{right-representative}
as $\rrep_T(S)$ $= S \beta$,
where $\beta \in \Sigma^*$ is the shortest string
such that $S\beta$ is right-maximal in $T$.

\noindent \textbf{CDAWGs.}
The \emph{suffix tree}~\cite{Weiner1973}
of a string $T$ is a rooted tree such that
each edge is labeled by a non-empty substring of $T$
and the path from the root to each node $v$ represents
a right-maximal substring in $\RightM(T)$.
By assuming a unique end-marker $\$$ at the right-end of $T$,
every internal node of the suffix tree of $T$ has at least two children,
and there is a one-to-one correspondence between
the leaves and the suffixes of $T$.
The \emph{compact directed acyclic word graph} (\emph{CDAWG})~\cite{Blumer1987}
of a string $T$,
denoted $\CDAWG(T) = (\mathsf{V}_T, \mathsf{E}_T)$,
is an edge-labeled smallest DAG that is obtained by
merging isomorphic subtrees of the suffix tree for $T$.
See the left diagram of Fig.~\ref{fig:CDAWG_grammar} for a concrete example of CDAWGs.
\begin{figure}[t]
  \centering
  \includegraphics[keepaspectratio,scale=0.39]{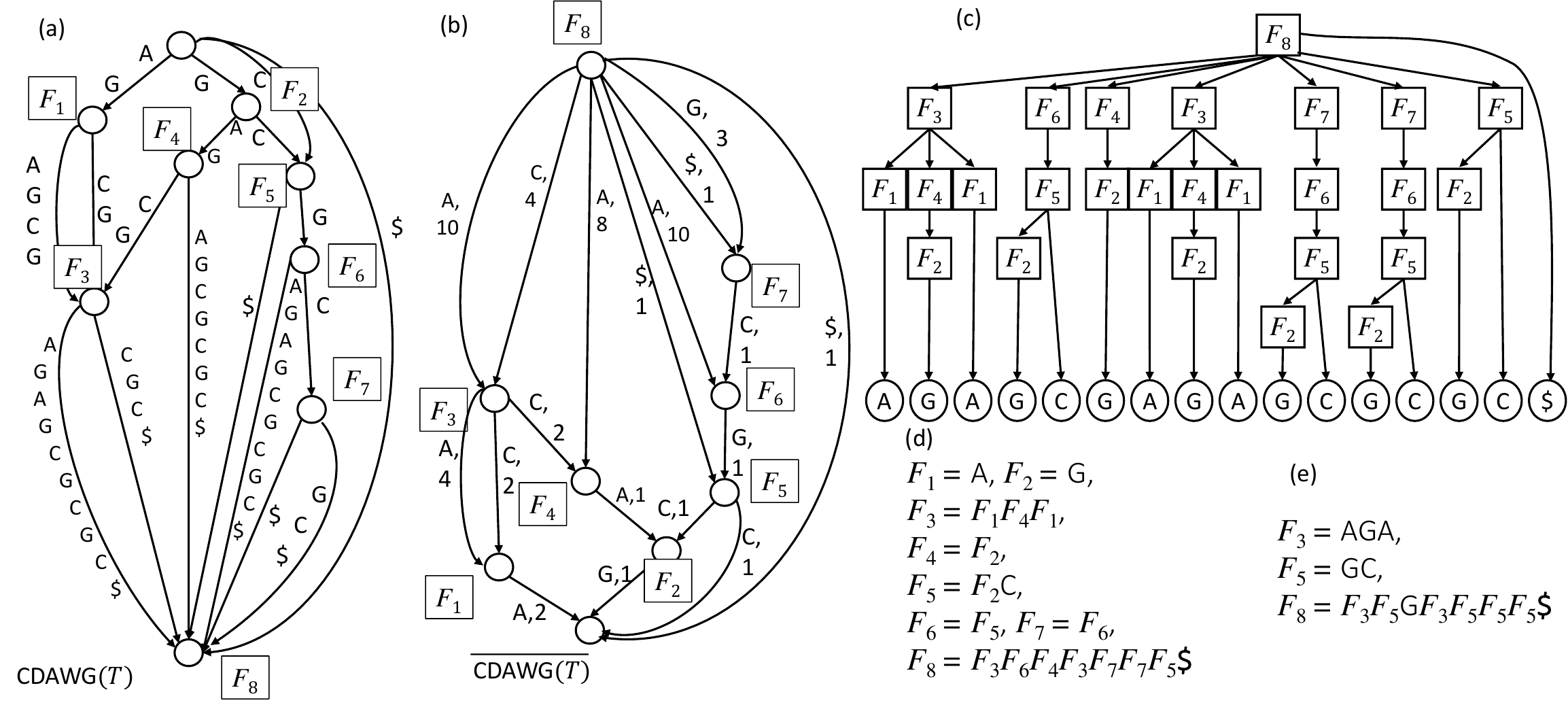}
  \caption{(a) $\CDAWG(T)$ of string $T=\mathtt{AGAGCGAGAGCGCGC}\$$ for which $\M(T) = \{\varepsilon, \mathtt{G}, \mathtt{GC}, \mathtt{AG}, \mathtt{GAG}, \mathtt{GCG}, \mathtt{GCGC}, \mathtt{AGAGCG}, T\}$. The number of right-extensions of $\CDAWG(T)$ is the number $\size(T)$ of edges, which is 18 in this example.
   (b) The reversed DAG $\overline{\CDAWG(T)}$, where each edge is labeled by the initial character and the length of the edge's original label.
   (c) The derivation tree $\mathcal{T}(T)$ obtained by unfolding $\overline{\CDAWG(T)}$.
   (d) The grammar rules obtained from $\mathcal{T}(T)$.
   (e) The resulting grammar $\grammar(T)$ without redundant rules. This grammar size is 13.}
  \label{fig:CDAWG_grammar}
\end{figure}
It is known that there is a one-to-one correspondence between
the nodes of $\CDAWG(T)$ and the maximal substrings in $\M(T)$,
such that the longest string represented by each node of $\CDAWG(T)$
is a maximal substring.
In what follows, we identify the nodes of $\CDAWG(T)$
with the maximal substrings in $\M(T)$.
  In particular,
  \hfnote*{modified}{%
  we identify the non-sink nodes of $\CDAWG(T)$
  }%
  with the maximal repeats in $\MR(T)$.
For any $x \in \MR(T)$, $\D_{T}(x)$ denotes the out-degree of the node $x$ in $\CDAWG(T)$.
Similar to suffix trees, every internal node $x$ of the $\CDAWG(T)$ has at least two children if $T$ terminates with $\$$.
Hence, the following fact holds.
\begin{fact}
  \label{fact:outdegree}
  If $T$ terminates with $\$$, $\D_{T}(x) \geq 2$ for any $x \in \MR(T)$.
\end{fact}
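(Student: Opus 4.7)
The plan is to show that any maximal repeat $x \in \MR(T)$ must fall into case~(1) of the right-maximality definition (i.e., admits two distinct right-extension characters) and then translate this into a statement about the out-degree in $\CDAWG(T)$. Since $x \in \MR(T) \subseteq \RightM(T)$, one of two alternatives holds: either (1) there exist distinct $a, b \in \Sigma$ with $xa, xb \in \Substr(T)$, or (2) $x \in \Suffix(T)$.

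Next I would rule out case~(2). If $x \in \Suffix(T)$ and $T$ ends with a character $\$$ that occurs only once in $T$, then the last character of $x$ is $\$$, so $x$ occurs in $T$ exactly once, namely as the suffix ending at position $|T|$. This gives $|\EndPos(x, T)| = 1$, which contradicts the defining property of a repeat that $|\EndPos(x, T)| \geq 2$. Hence case~(1) must hold, so there exist distinct characters $a, b$ with $xa, xb \in \Substr(T)$.

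It remains to convert this into the out-degree bound. In $\CDAWG(T)$, each outgoing edge from the node identified with $x$ is labeled by a nonempty string, and two outgoing edges whose labels begin with different characters are necessarily distinct edges (this follows from the edge-labeling structure inherited by merging isomorphic subtrees of the suffix tree). The right-extensions $xa$ and $xb$ must be reached from $x$ by following two edges whose labels begin with $a$ and $b$ respectively, so $x$ has at least two outgoing edges, giving $\D_T(x) \geq 2$. I do not anticipate a genuine obstacle here; the only point needing care is the last observation that distinct right-extension characters yield distinct CDAWG edges out of $x$, which is immediate from the way CDAWG edges are labeled.
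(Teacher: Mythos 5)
Your approach matches the paper's implicit reasoning: the paper simply asserts the fact from the suffix-tree analogy ("every internal node of the CDAWG has at least two children if $T$ terminates with $\$$"), and your proof makes this explicit via the right-maximality case split plus the observation that out-edges beginning with distinct characters are distinct edges.

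There is one small gap. The empty string $\varepsilon$ belongs to $\MR(T) = \M(T)\setminus\{T\}$ (it is the source node of $\CDAWG(T)$), and $\varepsilon \in \Suffix(T)$, so your case (2) applies to it. But the step ``the last character of $x$ is $\$$'' has no content when $x = \varepsilon$, so the contradiction argument does not rule out case (2) in that subcase, and your disjunctive-syllogism step (``(2) fails, hence (1) holds'') breaks down for $x = \varepsilon$. The conclusion still holds, since for $\varepsilon$ case (1) is immediate: $T$ contains $\$$ and at least one other character, giving two distinct right-extensions of $\varepsilon$. You should add a sentence handling $\varepsilon$ separately (or restrict the contradiction argument to non-empty $x$ and note the empty case directly). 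Aside from this, the proof is sound.
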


The \emph{size} of $\CDAWG(T) = (\mathsf{V}_T, \mathsf{E}_T)$
for a string $T$ of length $n$
is the number $\size(T) = |\mathsf{E}_T|$ of edges in $\CDAWG(T)$,
which is also referred to as the number of right-extensions of maximal repeats in $T$.

\noindent \textbf{CDAWG-grammars.}
A \emph{context-free grammar} (CFG) $\mathcal{G}$ which generates only a string $T$ is called a grammar compression for $T$. 
The size of $\mathcal{G}$ is the total length of the right-hand sides of all the production rules in $\mathcal{G}$. 

Suppose that the input string $T$ terminates with a unique end-marker $\$$.
The \emph{CDAWG grammar}~\cite{BelazzouguiC17_CPM,BelazzouguiC17_SPIRE} of $T$, denoted $\grammar(T)$,
is a grammar compression for $T$ that is built on $\CDAWG(T)$ as follows (see also Fig.~\ref{fig:CDAWG_grammar}):
Let $\overline{\CDAWG(T)}$ denote the DAG obtained by reversing all edges of $\CDAWG(T)$.
A critical observation is that,
since $\CDAWG(T)$ is a DAG obtained by merging subtrees of the suffix tree for $T$,
the paths in $\CDAWG(T)$ have a one-to-one correspondence between the suffixes of $T$,
and so do the paths in $\overline{\CDAWG(T)}$.
We label each non-source node $v$ in $\CDAWG(T)$
\hfnote*{modified}{%
by a non-terminal $F_i$ for 
$1 \leq i \leq |\mathsf{V}_T|-1$ 
}%
in a depth-first manner,
and let $p_k$ denote the path of $\CDAWG(T)$ that represents the suffix $T[k..|T|]$ starting at position $k$ in $T$.
Then, we build a tree $\mathcal{T}(T)$ by inserting, for each $1 \leq k \leq |T|$,
the reversed path $\overline{p_k}$ of $\overline{\CDAWG(T)}$
as the path of $\mathcal{T}(T)$ from the root to the $k$th leaf
that corresponds to the $k$th character $T[k]$.
In other words,
$\mathcal{T}(T)$ is the tree obtained by unfolding $\overline{\CDAWG(T)}$ into a tree.
Observe that $\mathcal{T}(T)$ is the derivation tree of a grammar that generates (only) $T$,
and its DAG representation is $\overline{\CDAWG(T)}$.
This initial grammar may contain
some redundant rules of form $F_i \rightarrow F_j$ (e.g., $F_6 \rightarrow F_5$ in Diagram (c) of Fig.~\ref{fig:CDAWG_grammar}).
We remove all such redundant rules, and the resulting grammar is the CDAWG grammar $\grammar(T)$
for $T$.

Let $\Gsize(T)$ denote the size of $\grammar(T)$.
Let $F_i \rightarrow w \in (\Sigma \cup \{F_1, \ldots, F_{i-1}\})^+$
be a production in $\grammar(T)$.
The length of the right-hand side of the production in $\grammar(T)$
is equal to the number of outgoing edges of the node of $\overline{\CDAWG(T)}$ that corresponds to $F_i$.
This is equal to the number of in-edges of the node of $\CDAWG(T)$.
Thus, we have $\Gsize(T) \leq \size(T)$.
Recalling that the redundant rules have been removed in $\grammar(T)$,
we obtain the following equation with the number $\vone(T)$ of nodes of in-degree 1 in $\CDAWG(T)$.
\begin{fact}
  \label{fact:Gsize}
  $\Gsize(T) = \size(T) - \vone(T)$.
\end{fact}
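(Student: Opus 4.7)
The plan is to prove the identity by a direct size-accounting argument in three steps, building on what was already argued in the paragraph immediately preceding the fact.

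First, I would confirm that the initial grammar obtained from $\mathcal{T}(T)$, before any rule removal, has total right-hand-side length exactly $\size(T)$. This is essentially the preceding paragraph: each non-source node $v$ of $\CDAWG(T)$ contributes one production whose right-hand-side length equals the out-degree of $v$ in $\overline{\CDAWG(T)}$, i.e., the in-degree of $v$ in $\CDAWG(T)$. Summing over all non-source nodes and using that the source $\varepsilon$ is the only node of in-degree $0$, the total equals $|\mathsf{E}_T| = \size(T)$.

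Second, I would characterize the rules eliminated during redundant-rule removal as exactly the unit productions, i.e., those whose right-hand side has length $1$. A production for $F_i$ has right-hand side of length $1$ iff the corresponding non-source node $v$ of $\CDAWG(T)$ has in-degree $1$, so there are exactly $\vone(T)$ such productions. The single symbol on the right-hand side is either another non-terminal $F_j$ (giving the form $F_i \to F_j$ highlighted in the paper) when the sole in-neighbor of $v$ is non-source, or a terminal character when the sole in-neighbor is the source $\varepsilon$; in either case the rule is inlineable.

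Third, I would show that inlining each unit production reduces the grammar size by exactly $1$. Replacing the unique right-hand-side symbol $X$ for every occurrence of $F_i$ in the other rules is a one-symbol-for-one-symbol substitution, so it preserves every remaining rule's right-hand-side length. The only effect on total size is the deletion of the unit production itself, which removes one symbol. Because the substitution is length-preserving, no new unit production is created and no other unit production is destroyed; the $\vone(T)$ removals can therefore be processed in any order and each contributes exactly $-1$ to the size. Combining the three steps yields $\Gsize(T) = \size(T) - \vone(T)$.

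The main subtle point to settle carefully in a formal write-up is in the second step: the phrase ``redundant rules of form $F_i \to F_j$'' must be read to cover every unit production, including one whose single right-hand-side symbol is a terminal rather than a non-terminal (which arises precisely when an in-degree-$1$ node's sole in-neighbor is $\varepsilon$); the stated identity forces both kinds of unit productions to be eliminated in the reduction step.
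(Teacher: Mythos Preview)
Your proposal is correct and is essentially a careful fleshing-out of the informal argument the paper gives in the paragraph immediately preceding Fact~\ref{fact:Gsize}; the paper itself offers no further proof beyond that paragraph. Your three-step accounting (initial size $=\size(T)$, unit productions $\leftrightarrow$ in-degree-$1$ nodes, each inlining costs exactly $-1$) makes explicit precisely what the paper leaves implicit, and your observation that the length-preserving nature of the substitution prevents creation or destruction of other unit productions is the key detail needed to justify that the $\vone(T)$ removals contribute exactly $-\vone(T)$ in total.
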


\section{Sensitivity of CDAWG-grammar $\Gsize$ and naive bound}
\label{sec:naive}

We define the worst-case \emph{additive sensitivity} of the CDAWG-grammar
for any string of length $n$ that terminates with $\$$ as follows:
\begin{eqnarray}
  \AS(\Gsize, n) & = & \max_{T \in \Sigma^n, T' \in \mathcal{K}(T,1)} \{\Gsize(T') - \Gsize(T) \mid T[|T|] = T'[|T'|] = \$\} \nonumber \\
  & = & \max_{T \in \Sigma^n, T' \in \mathcal{K}(T,1)} \{\size(T') \! - \! \size(T) \! + \! \vone(T) \! - \! \vone(T') \mid T[|T|] \! = \! T'[|T'|] \! = \! \$\} \label{eqn:grammar_size}
\end{eqnarray}
where $\mathcal{K}(T,1)$ denotes the set of strings
of edit distance 1 from $T$.
The second equation is due to Fact~\ref{fact:Gsize}.


\begin{example}
  \label{ex:size_diff}
  Let $T = \mathtt{abcabab\$}$ and $T' = \mathtt{\underline{b}}\mathtt{abcabab\$}$, 
  where $\mathtt{\underline{b}}$ is prepended to $T$. 
  The CDAWG and CDAWG-grammar of $T$ and $T'$ are shown in Fig.~\ref{fig:size_diff}.
  The size difference of the CDAWG is $\size(T') - \size(T) = 5$, while that of the CDAWG-grammar is $\Gsize(T') - \Gsize(T) = 2$.
\end{example}
\begin{figure}[t!]
  \centering
  \includegraphics[keepaspectratio,scale=0.5]{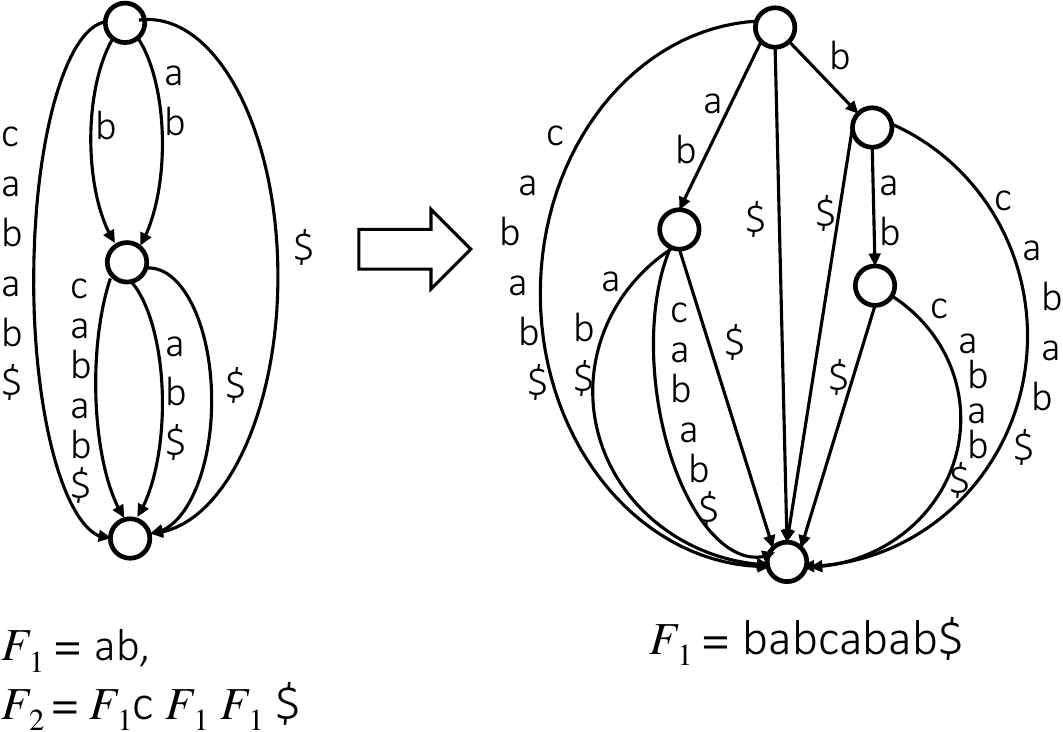}
  \caption{The CDAWG and CDAWG-grammar for the strings $T$ and $T'$ in Example~\ref{ex:size_diff}.}
  \label{fig:size_diff}
\end{figure}

It follows from Fact~\ref{fact:outdegree} that
$2\vone(T) \leq 2|\MR(T)| \leq \size(T)$.
Thus, by Fact~\ref{fact:Gsize} we get $\size(T)/2 \leq \Gsize(T) \leq \size(T)$.
Hamai et al.~\cite{HamaiFI2025} showed that $\size(T') \leq 8\size(T) + 4$ holds for any strings $T$ and $T' \in \mathcal{K}(T,1)$.
Using these inequalities, it immediately follows that
$\Gsize(T') \leq \size(T') \leq 8\size(T)+4 \leq 16\Gsize(T)+4$,
which implies $\AS(\Gsize, n) \leq 15\Gsize(T) + 4$.
In the following sections, we improve the constant factor of $\AS(\Gsize, n)$
from $15$ to $8$.

\section{Lower bound for the sensitivity of CDAWG-grammar}

Below, we present a non-trivial lower bound for $\AS(\Gsize, n)$.

\begin{theorem} \label{theo:lowerbound_new}
  There exists a family of strings $T$ and strings $T'\in K(T,1)$ such that
  \[
    \Gsize(T')-\Gsize(T)=3\Gsize(T)-21.
  \]
  Therefore $\AS(\Gsize, n) \geq 3\Gsize-O(1)$.
\end{theorem}

\begin{proof}
Let
\[
T=(ab)^m c (ab)^m c\$
\]
for an integer $m\ge 2$, where $a,b,c,\$\in\Sigma$ are pairwise distinct characters.
Let $T'$ be obtained from $T$ by substituting the first $c$ with $b$, namely
\[
T'=(ab)^m b (ab)^m c\$.
\]
Clearly, $T'\in K(T,1)$, and both $T$ and $T'$ terminate with the same end-marker $\$$.

We first compute $\Gsize(T)$. The maximal substrings of $T$, except for the sink $T$, are precisely
\[
\varepsilon,\quad (ab),(ab)^2,\ldots,(ab)^{m-1},\quad (ab)^m c.
\]
Indeed, for $1\le r\le m-1$, the string $(ab)^r$ is both left- and right-maximal: it has occurrences preceded by different left contexts (the beginning of $T$, $b$, or $c$) and occurrences followed by different right contexts $a$ and $c$. Also, $(ab)^m c$ is both left- and right-maximal, since its two occurrences are preceded by the beginning of $T$ and by $c$, and are followed by $a$ and $\$$, respectively. No other substring is both left- and right-maximal: a substring not containing $c$ is a factor of the alternating blocks $(ab)^m$, and only the even aligned factors $(ab)^r$ can have two right contexts; among them, $(ab)^m$ has only the right context $c$. A substring containing $c$ can be left-maximal only when it is $(ab)^m c$ or $T$.

The right-extensions of these maximal repeats are as follows:
\[
\D_T(\varepsilon)=4,\qquad
\D_T((ab)^r)=2\quad (1\le r\le m-1),\qquad
\D_T((ab)^m c)=2.
\]
Hence
\[
\size(T)=4+2(m-1)+2=2m+4.
\]
Moreover, the nodes of in-degree one are exactly
\[
(ab)^2,(ab)^3,\ldots,(ab)^{m-1},
\]
and therefore
\[
\vone(T)=m-2.
\]
By Fact~2,
\[
\Gsize(T)=\size(T)-\vone(T)=(2m+4)-(m-2)=m+6.
\]
Equivalently, the corresponding reduced CDAWG grammar can be described as follows.
Let \(X_{\mathtt{ab}}\) be the nonterminal corresponding to the node \((ab)\),
let \(X_D\) be the nonterminal corresponding to the node
\(D=(ab)^m c\), and let \(X_T\) be the start nonterminal.
After removing the unary rules induced by the in-degree-one nodes
\((ab)^2,\ldots,(ab)^{m-1}\), the grammar has the productions
\[
X_{ab}\to ab, \qquad
X_D\to \underbrace{X_{ab}\cdots X_{ab}}_{m\ \mathrm{copies}}c,
\qquad
X_T\to X_D\,X_D\,\mathtt{\$}.
\]
The total length of the right-hand sides is \(2+(m+1)+3=m+6\), in agreement
with the formula above.

We next compute $\Gsize(T')$. The maximal substrings of $T'$, except for the sink $T'$, are precisely
\[
\varepsilon,\quad
(ab),(ab)^2,\ldots,(ab)^m,\quad
b,b(ab),b(ab)^2,\ldots,b(ab)^{m-1}.
\]
The substitution of the first $c$ by $b$ creates the unique factor $bb$ between the two alternating blocks. This makes the strings $b(ab)^r$ left- and right-maximal for $0\le r\le m-1$, while the strings $(ab)^r$ remain left- and right-maximal for $1\le r\le m$. All other factors either have a unique left context or a unique right context, or contain the unique end-marker and hence do not give additional maximal repeats.

Their right-degrees are
\[
\D_{T'}(\varepsilon)=4,
\]
\[
\D_{T'}((ab)^m)=2,
\]
and every other maximal repeat among the above $2m-1$ strings has three right-extensions. Thus
\[
\size(T')=4+3(2m-1)+2=6m+3.
\]
Furthermore, all the following $2m$ nodes have in-degree one:
\[
(ab),(ab)^2,\ldots,(ab)^m,\quad
b,b(ab),b(ab)^2,\ldots,b(ab)^{m-1}.
\]
Hence
\[
\vone(T')=2m.
\]
Again by Fact~2,
\[
\Gsize(T')=\size(T')-\vone(T')=(6m+3)-2m=4m+3.
\]
In terms of the reduced CDAWG grammar, the situation is particularly simple:
all internal maximal-repeat nodes of \(\CDAWG(T')\) have in-degree one, and hence
the unary rules corresponding to them are removed.  Thus the reduced grammar
has only the start production
\[
X_{T'}\to
\underbrace{a\,b\cdots a\,b}_{m \ \mathrm{copies \ of} \ ab}\,b\,
\underbrace{a\,b\cdots a\,b}_{m \ \mathrm{copies \ of} \ ab}\,
c\,\$.
\]
The length of this right-hand side is \(2m+1+2m+2=4m+3\), again agreeing
with \(\Gsize(T')\).

Consequently,
\[
\Gsize(T')-\Gsize(T)=(4m+3)-(m+6)=3m-3.
\]
Since $\Gsize(T)=m+6$, we obtain
\[
\Gsize(T')-\Gsize(T)=3\Gsize(T)-21.
\]
This proves the theorem.
\end{proof}
Fig.~\ref{fig:lowerbound_new} shows
the CDAWGs of the strings $T$ and $T'$ with $m = 3$.

\begin{figure}[H]
  \centering
  \includegraphics[width=0.7\linewidth]{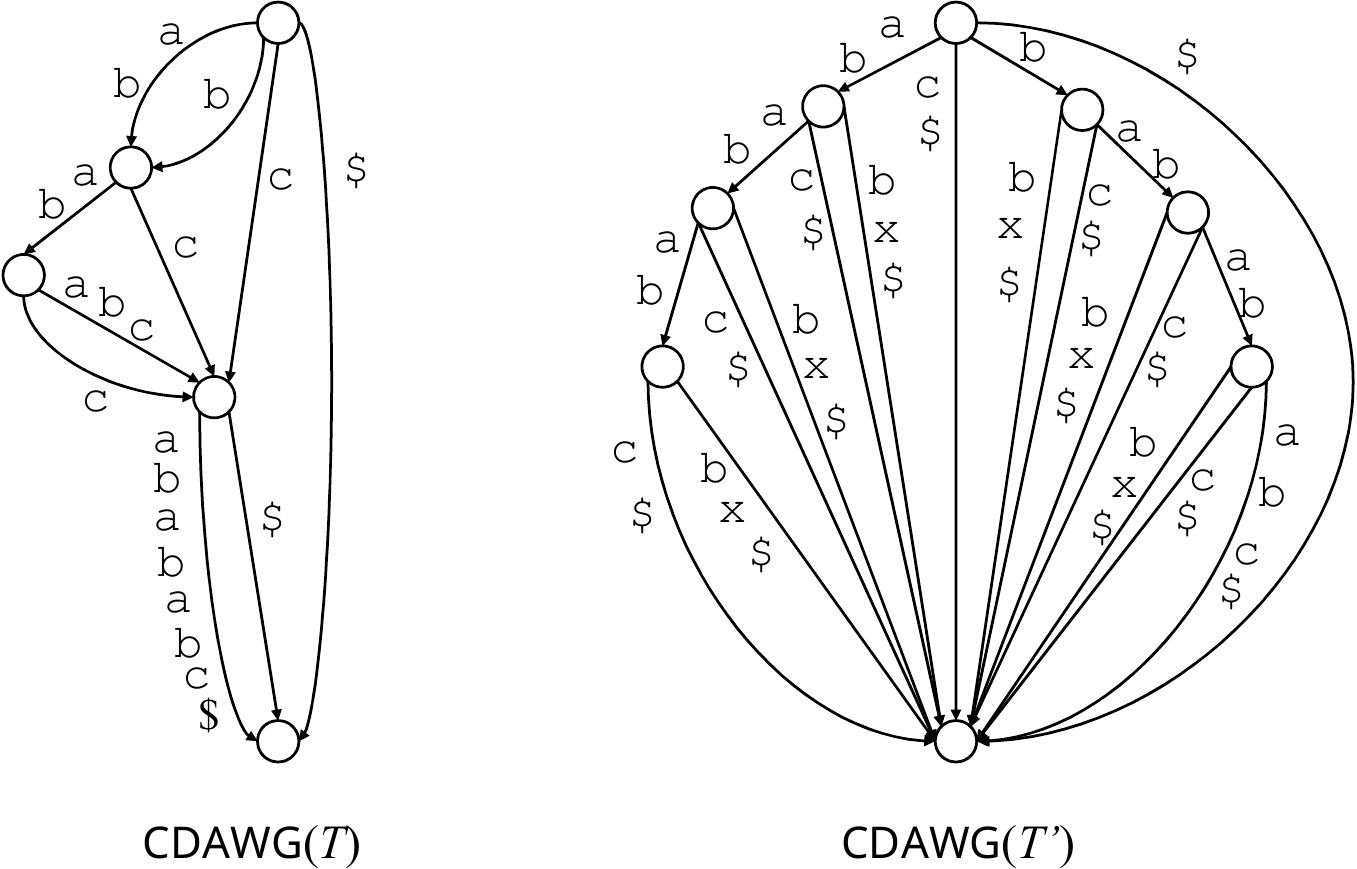}
  \caption{CDAWGs for the lower-bound example with $m=3$. Here
  $T=(\mathtt{ab})^3\mathtt{c}(\mathtt{ab})^3\mathtt{c}\$=\mathtt{abababcabababc}\$$ and
  $T'=(\mathtt{ab})^3\mathtt{b}(\mathtt{ab})^3\mathtt{c}\$=\mathtt{abababbabababc}\$ = \mathtt{abababbx}\$$ with $\mathtt{x} = \mathtt{abababc}$.
  For $T$, the reduced CDAWG grammar has productions
  $X_{\mathtt{ab}}\to \mathtt{a}\mathtt{b}$,
  $X_{\mathtt{x}}\to X_{\mathtt{ab}}X_{\mathtt{ab}}X_{\mathtt{ab}}\mathtt{c}$, and
  $X_T\to X_{\mathtt{x}}X_{\mathtt{x}}\mathtt{\$}$.
  For $T'$, all internal maximal-repeat nodes have in-degree one, and the reduced CDAWG grammar consists of
  $X_{T'}\to \mathtt{a}\mathtt{b}\mathtt{a}\mathtt{b}\mathtt{a}\mathtt{b}\mathtt{b}\mathtt{a}\mathtt{b}\mathtt{a}\mathtt{b}\mathtt{a}\mathtt{b}\mathtt{c}\mathtt{\$}$.
  We have $\size(T)=10$, $\vone(T)=1$, and $\Gsize(T)=9$, while
  $\size(T')=21$, $\vone(T')=6$, and $\Gsize(T')=15$.
  Hence $\Gsize(T')-\Gsize(T)=6=3\Gsize(T)-21$.}
  \label{fig:lowerbound_new}
\end{figure}

\section{Edge-sensitivity ingredients for CDAWG-grammars}
\label{sec:edges}

In this section, we revisit the CDAWG sensitivity analysis of Hamai et al.~\cite{HamaiFI2025}
and derive an upper bound on $\size(T')-\size(T)$ tailored to our analysis of
CDAWG-grammars.

Throughout this paper, $T$ denotes an arbitrarily fixed string
and $T'$ denotes the string obtained by performing a single-character edit operation on $T$. 
For an insertion or a substitution, let $i$ be the edit position in $T'$
that has the inserted or substituted character.
For a deletion, if the deleted character is $T[i]$ at position $i$ in $T$,
then the edit position in $T'$ is regarded as the empty gap between positions $i-1$ and $i$ in $T'$.
In the deletion case, $i$ also denotes the position immediately to the right of the gap in $T'$.

\begin{definition}
Let $x=T'[j..k]$ be a non-empty substring occurrence of $T'$,
where $1 \leq j \leq k \leq |T'|$.
This occurrence $[j..k]$ of $x$ in $T'$
is called \emph{crossing} if it contains or touches the edit
position $i$.  More precisely, it is crossing if
$k=i-1$ or $j\le i\le k$ or $j=i+1$ for an insertion or a substitution,
and if $j \le i \le k+1$ for a deletion.
\end{definition}

Let $x_L$ (resp. $x_R$) denote the leftmost (resp. rightmost) crossing occurrence of $x$ in $T'$,
and let $j_L$ and $k_L = j_L + |x|-1$ (resp. $j_R$ and $k_R = j_R + |x|-1$)
be the beginning and ending positions of $x_L$ (resp. $x_R$) in $T'$.
For $x_L$ and $x_R$, we consider the substrings
$S_{x_L} = T'[i..k_L]$ and $S_{x_R} = T'[i..k_R]$
of $T'$ (see Fig.~\ref{fig:x_Lp_L}),
which may be empty.

We divide $\MR(T')$ into the following four subsets:

\begin{figure}[H]
    \centering
  \includegraphics[keepaspectratio,scale=0.5]{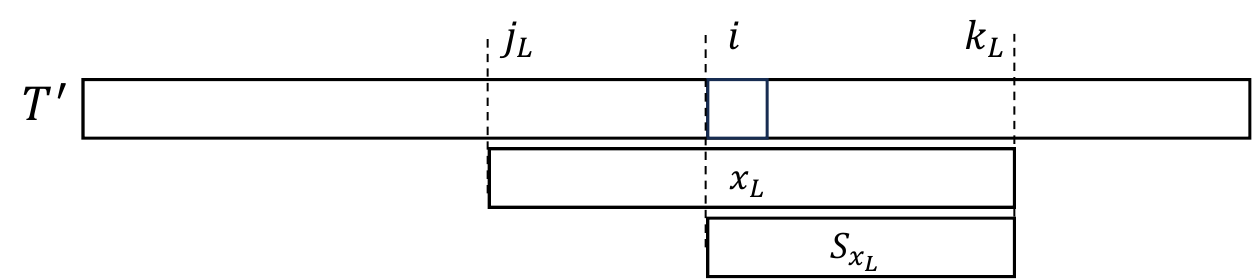}
  \caption{Illustration of $x_L$ in $T'$ for the case where $x_L$ contains $i$.}
  \label{fig:x_Lp_L}
\end{figure}

\begin{definition}
    Let 
    $\N=(\M(T')\setminus \M(T)) \setminus \{T'\}$ and $\Q=\MR(T')\setminus \N$~(or equivalently $\Q$ = $\MR(T') \cap \MR(T)$).
    We divide $\N$ into $\None$, $\Ntwo$, and $\Nthree$ as 
    $\None = \N\cap \RightM(T) $,
    $\Ntwo = \N\cap \LeftM(T) $,
    and $\Nthree = \N\backslash (\None \cup \Ntwo)$.
    Further, let $\Nadd$ denote the set of strings $x \in \Nthree$ satisfying the following two conditions:
    (1) If all occurrences of $x$ in $T'$ are crossing occurrences of $x$ in $T'$, then $x \in \MR(T')$, and
    (2) There is no distinct right-extension of $x$ in $T'$ other than the right-extension(s) of the crossing occurrence(s) of $x$.
    Let $\Nbase = \Nthree \setminus \Nadd$.
\end{definition}

Also, we define the number of remaining edges after edits by
$\remedge(T,T') = |\mathsf{E}_T \cap \mathsf{E}_{T'}|$.
We write it simply as $\remedge$ when it is clear from the context.

\subsection{Upper bound for total out-degrees of nodes w.r.t. $\None \cup \Nbase$}
Here, we show an upper bound for the total out-degrees of nodes corresponding to the strings that are elements of $\None \cup \Nbase$.

\begin{lemma}[Lemma 2 and 3 of~\cite{HamaiFI2025}]
  If $x \in \None \cup \Nbase$, there is no pair $(y, z) \subseteq \None \cup \Nbase$ of distinct strings ($y \neq z$) with $|x| < |y|$ and $|x| < |z|$
  such that
  both $S_{x_L} = S_{{y}_{F}}$ and $S_{x_R}=S_{{z}_{G}}$ hold at the same time, where $F,G \in \{L,R\}$.
\end{lemma}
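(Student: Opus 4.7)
The plan is to argue by contradiction. Assume $x \in \None \cup \Nbase$ and there exist distinct $y, z \in \None \cup \Nbase$ with $|x| < |y|$ and $|x| < |z|$ such that $S_{x_L} = S_{y_F}$ and $S_{x_R} = S_{z_G}$ for some $F, G \in \{L, R\}$. The first step is to translate the equalities into positional statements in $T'$: each of $S_{x_L}, S_{y_F}, S_{x_R}, S_{z_G}$ is anchored at the edited position $i$ (either starting at $i$ when nonempty, or empty and corresponding to ``touching $i$ from the left''), so equality as strings forces equality as occurrences in $T'$. In particular $y_F$ ends at the same position as $x_L$, and $z_G$ ends at the same position as $x_R$. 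Since $|y| > |x|$, the occurrence $x_L$ is a proper suffix of $y_F$, making $x$ a proper suffix of the string $y$; symmetrically, $x$ is a proper suffix of $z$.

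The next step extracts left-context information. Writing $a = y[|y| - |x|]$ and $b = z[|z| - |x|]$, the character immediately to the left of $x_L$ in $T'$ equals $a$, and the character immediately to the left of $x_R$ equals $b$. Since $x \in \None \cup \Nbase \subseteq \MR(T') \setminus \LeftM(T)$, the string $x$ is not left-maximal in $T$, so every non-crossing occurrence of $x$ in $T'$ shares a single preceding character $c$ inherited from $T$. Left-maximality of $x$ in $T'$, which is required because $x \in \MR(T')$, then forces at least one of the crossing occurrences of $x$ to carry a left-context different from $c$.

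With this setup, I carry out a case analysis on $F, G \in \{L, R\}$ and on the three positional relations between each crossing occurrence and $i$ (touching from the left, containing, touching from the right), handled separately for insertion, substitution, and deletion. In each sub-case, I use the positional data from Step 1 to pin down the intervals of $T'$ occupied by $y_F$ and $z_G$, and then combine (i) the suffix identity of $x$ inside $y$ and $z$, (ii) the fact that $y, z \in \None \cup \Nbase$ are themselves left-maximal in $T'$ but not in $T$, and (iii) the characters $a, b, c$ identified in Step 2. The goal in each sub-case is to show that $y$ and $z$ must agree character by character, pinned down by their shared suffix $x$, their shared end-position constraints relative to $i$, and the unique leftward extension available through the unedited region of $T$ via the non-crossing occurrences. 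This contradicts $y \neq z$.

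I expect the main obstacle to be the bookkeeping in the case analysis, since the combinations of $F, G$, the positional relations for each of $x_L$ and $x_R$, and the three edit operations together yield a moderate number of sub-cases. In each sub-case the actual contradiction is short once the intervals are written out precisely, but one must verify consistently that the designated leftmost/rightmost crossing occurrence of the strictly longer strings $y$ and $z$ really exists in the claimed position, and that the non-crossing occurrences needed to witness the inherited left-context $c$ of $x$ genuinely survive the edit.
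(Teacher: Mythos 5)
This lemma is cited from Hamai et al.\ \cite{HamaiFI2025} and the paper itself gives no proof, so there is no in-paper argument to compare against; I can only assess your attempt on its own terms.

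Your Steps 1--2 are sound. Because $S_{x_L}$ and $S_{y_F}$, when nonempty, are both anchored so as to begin at the edit position $i$, their equality as strings does force $x_L$ and $y_F$ to share an ending position in $T'$, and $|y|>|x|$ then makes $x$ a proper suffix of $y$; the same reasoning applies to $z$. The membership $x \in \None \cup \Nbase$ does indeed give $x \in \MR(T')\setminus\LeftM(T)$.

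The problem is Step~3, which is where the entire content of the lemma lives, and which you leave as an unexecuted plan. You announce that ``in each sub-case'' the positional data plus left-maximality will pin $y$ and $z$ down ``character by character'' and force $y=z$, but you never exhibit a mechanism that controls the \emph{lengths} of $y$ and $z$. The equality $S_{x_L}=S_{y_F}$ fixes only where $y_F$ \emph{ends} (at the endpoint of $x_L$); it places no upper bound on $|y|$, so $y_F$ can extend arbitrarily far to the left past $x_L$, and likewise $z_G$ past $x_R$. Since $x_L$ and $x_R$ generally end at different positions, $y$ and $z$ need not even be comparable as intervals, and ``shared suffix $x$ plus a shared anchor at $i$'' simply does not determine them uniquely. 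The auxiliary claim in Step~2 also has gaps: the ``unique inherited left-context $c$'' argument tacitly assumes $x$ occurs in $T$ and has at least one surviving non-crossing occurrence, and it does not handle the possibility that $x$ is left-maximal in $T'$ by virtue of being a prefix of $T'$ rather than by having two distinct left-extensions. Finally, it is not even clear that $y=z$ is the right contradiction to aim for; the statement is engineered so that the association ``use $S_{x_L}$ if unclaimed, else $S_{x_R}$'' is well-defined, and the actual obstruction proved in \cite{HamaiFI2025} may well be of a different nature (e.g., that $x$ would have to already be a maximal repeat of $T$, or that a crossing occurrence of $y$ or $z$ could not exist where required). As written, this is a proof sketch with its central step missing and with a target conclusion that is not justified.
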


From now on we consider the correspondence $\None \cup \Nbase$ and $\MR(T)$.
For any $x \in \None\cup \Nbase$, if there is no $y\in\None\cup\Nbase$ ($|y|>|x|, G\in \{L,R\}$) s.t.
$S_{x_L} = S_{y_G}$, then we associate $x$ with $S_{x_L}$, and otherwise we associate $x$ with $S_{x_R}$.

Let $U(x)$ be the string that is obtained by extending $S_{x_G}$, to which $x$ corresponds, 
to the left in $T$ until becoming left-maximal in $T$.
The following holds:

\begin{lemma}[Lemmas 4 and 5 of~\cite{HamaiFI2025}]
    \label{lem:N1edge}
    If $x,y \in \None \cup \Nbase$ and $x \neq y$, then $U(x),U(y)\in \M(T)$ and $U(x) \neq U(y)$.
    Also, $\D_{T'}(x) \leq \D_{T}(U(x)) + 2$.
\end{lemma}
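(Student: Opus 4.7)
The plan tackles the three assertions separately, using the association rule that defines $U(\cdot)$ together with the preceding lemma on the forbidden pair $(y,z)$.

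For $U(x) \in \M(T)$, left-maximality is built into the construction, so I need to verify both that $U(x)$ occurs in $T$ at all and that it is right-maximal there. Occurrence in $T$ is immediate when $x \in \None$, since then $x \in \RightM(T)$ is already a substring of $T$ and so is the ``almost-suffix'' $S_{x_G}$ of $x$ (a suffix of $x$, $x$ itself, or $x$ prepended by a single character, depending on which positional case of $x_G$ versus $i$ applies); when $x \in \Nbase = \Nthree \setminus \Nadd$, the negation of the two defining conditions of $\Nadd$ guarantees either a non-crossing occurrence of $x$ or a right-extension not arising from the crossing occurrences, both of which live in $T$ and again give an occurrence of $S_{x_G}$ there. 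For right-maximality of $U(x)$ in $T$, the key step is that $x$ is right-maximal in $T'$ (being a maximal repeat), so $x$ has at least two distinct right-extensions in $T'$; I would argue these propagate to two distinct right-extensions of $S_{x_G}$, and hence of $U(x)$, in $T$, using that the edit-tweak relating $S_{x_G}$ to $x$ is bounded and that left-extending never affects right-extensions.

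For injectivity, I would suppose $U(x) = U(y)$ with $|x| \leq |y|$, making $S_{x_G}$ and $S_{y_H}$ nested suffixes of the common string $U(x)$. The association rule is designed exactly to forbid this: if $x$ is tied to $S_{x_L}$, then by the rule no strictly longer member of $\None \cup \Nbase$ matches $S_{x_L}$, so $y$ cannot exist; if instead $x$ is tied to $S_{x_R}$, the rule implies some longer element of $\None \cup \Nbase$ already matches $S_{x_L}$, and pairing it with $y$ matching $S_{x_R}$ reproduces precisely the pair $(y,z)$ forbidden by the preceding lemma.

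For the degree bound $\D_{T'}(x) \leq \D_T(U(x)) + 2$, the strategy is to inject the right-extension characters of $x$ in $T'$ into those of $U(x)$ in $T$, tolerating at most two exceptions. Any right-extension of $x$ in $T'$ arises from an occurrence of $x$ that is either non-crossing or equals $x_L$ or $x_R$ ($x$ being a suffix of $T'$ also counts as one such extension). Non-crossing contributions transfer intact to $T$ and pull back through $S_{x_G}$ to give right-extensions of $U(x)$ in $T$; the right-extensions attached to $x_L$ and $x_R$ may involve the edited character and fail to lift, yielding at most two unmatched extensions, which is exactly the $+2$. The main obstacle I expect is carrying out this injection uniformly across the six sub-cases (three edit operations crossed with the three positional relations of $x_G$ to $i$), tracking how the one-character edit tweak between $x$ and $S_{x_G}$ interacts with the counting, and handling the boundary case where $x$ is a suffix of $T'$ so that the slack never exceeds two.
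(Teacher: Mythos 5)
The paper does not prove this lemma: it is imported from Hamai, Fujimaru and Inenaga~\cite{HamaiFI2025} as their Lemmas~4 and~5, and no proof appears in this manuscript. So there is no in-paper argument for your sketch to be compared against; the details you would need to supply live in the cited source.

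Taking the sketch on its own terms, the general directions are reasonable, but the injectivity step has a concrete gap. You reason as though $U(x)=U(y)$ forces the associated strings $S_{x_G}$ and $S_{y_H}$ to coincide, and then appeal to the association rule and the unnumbered ``no pair $(y,z)$'' lemma, both of which are phrased in terms of equalities of $S$-values. But $U(\cdot)$ is defined as the leftward extension of $S_{x_G}$ in $T$ up to the first left-maximal string, so $U(x)=U(y)$ only tells you that $S_{x_G}$ and $S_{y_H}$ are two (possibly different-length, nested) suffixes of the same left-maximal string, each of which fails to be left-maximal until that common endpoint. That does not give $S_{x_G}=S_{y_H}$, and without that equality neither the association rule nor the forbidden-pair lemma applies as you use them. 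You would need an additional argument ruling out strictly nested $S$-values, which is presumably part of what the cited Lemma~4 establishes. Similarly, in the degree bound you assert that only $x_L$ and $x_R$ can contribute unmatched right-extensions, but $x$ may have several crossing occurrences in $T'$, each potentially supplying a distinct following character; bounding these by two is precisely the nontrivial content of the cited Lemma~5 and cannot be waved through. You flag these points as expected ``obstacles,'' which is honest, but as written the sketch is not a proof and, more to the point, the paper offers none to check it against.
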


Also, for each $x\in N_1\cup N_{3A}$, we define $\Ndelta(x)=2-\max(d_{T'}(x)-d_T(U(x)),0)$.
Then, we have the following:
$\sum_{x \in \None \cup \Nbase}\D_{T'}(x) 
\leq \sum_{x \in \None \cup \Nbase}(\D_T(U(x)) + 2-\Ndelta(x))
\leq \size(T) + 2|\M(T)| - \sum_{x \in \None \cup \Nbase}(\Ndelta(x))$.



\begin{lemma}
    \label{lem:dt1}
    $\sum_{x \in \None \cup \Nbase}\D_{T'}(x) \leq \size(T) + 2|\M(T)| - \sum_{x \in \None \cup \Nbase}\Ndelta(x)$.
\end{lemma}

\subsection{Upper bound for total out-degrees of nodes w.r.t. $\Ntwo \cup \Q$}
Here, we consider an upper bound for the total out-degrees of nodes corresponding to the strings that are elements of $\Ntwo \cup \Q$.
We divide the total out-edges of the nodes into two:
the number $\uc(T)$ of \emph{the right-extensions of the non-crossing occurrences} 
and the number $\wc(T)$ of \emph{the right-extensions only in the crossing occurrences},
  such that $\sum_{x \in \Ntwo\cup\Q}\D_{T'}(x) = \uc(T) + \wc(T)$.

For $\uc(T)$, the right-extensions of $x$ with non-crossing occurrences in $T'$ does not exceed the right-extensions of $x$ in $T$ 
because the non-crossing occurrences are not affected by the edit.
If $x\in \Ntwo$, then $x$ is not right-maximal in $T$. Thus, each node has at most one right-extension with non-crossing occurrences.
\hfnote*{modified}{%
On the other hand, each remaining right extension of $x\in \Q$ corresponds to the element of $E(T)\cap E(T')$ by definition.
By summing up these, we get:

\begin{lemma}
    $\uc(T) \leq |\Ntwo| + \remedge$.
\end{lemma}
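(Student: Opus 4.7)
The plan is to split $\uc(T)$ according to whether $x$ lies in $\Ntwo$ or in $\Q$ and to bound each part by reducing right-extensions in $T'$ at non-crossing occurrences to right-extensions in $T$. The underlying observation I would establish first is that any non-crossing occurrence of a substring $x$ in $T'$ lies in a region of $T'$ that is identical to a region of $T$ (the edit only perturbs characters around position $i$), so such an occurrence corresponds to an occurrence of $x$ in $T$ whose following character in $T'$ is identical to the following character in $T$. Consequently, the multiset of right-extension characters of non-crossing occurrences of $x$ in $T'$ is a subset of the right-extension characters of occurrences of $x$ in $T$.

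For $x \in \Ntwo$, I would exploit that $\Ntwo \subseteq \N = (\M(T') \setminus \M(T)) \setminus \{T'\}$, so $x \notin \M(T)$, together with $x \in \LeftM(T)$. Since $\M(T) = \LeftM(T) \cap \RightM(T)$, this forces $x \notin \RightM(T)$. Hence $x$ admits only a single right-extension character in $T$, and by the observation above, its non-crossing occurrences in $T'$ contribute at most one right-extension. Summing over $\Ntwo$ yields a contribution of at most $|\Ntwo|$.

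For $x \in \Q = \MR(T') \cap \MR(T)$, the node of $\CDAWG(T)$ corresponding to $x$ has out-degree $\D_T(x)$, and by the same observation the non-crossing right-extensions of $x$ in $T'$ form a subset of the right-extensions of $x$ in $T$. Therefore this part is bounded by $\sum_{x \in \Q} \D_T(x)$, which in turn is at most $\sum_{x \in \MR(T)} \D_T(x) \leq \size(T)$ since $\size(T)$ counts all edges of $\CDAWG(T)$, including those leaving non-repeat nodes.

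Adding the two bounds gives $\uc(T) \leq |\Ntwo| + \size(T)$. The only delicate step is the very first one, namely making precise that ``non-crossing'' really does mean ``identical context in $T$ and $T'$,'' including the character immediately following the occurrence. I would handle this by a short case analysis on the three edit types (insertion, deletion, substitution), verifying that a non-crossing occurrence $T'[j..k]$ never overlaps and never abuts the edited position, so the character $T'[k{+}1]$ equals the corresponding character in $T$. The remaining counting steps are immediate.
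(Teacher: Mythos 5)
Your proposal matches the paper's proof in both decomposition and reasoning: split $\uc(T)$ over $\Ntwo$ and $\Q$, observe that non-crossing occurrences see unchanged contexts so their right-extension characters are a subset of those in $T$, note that $x\in\Ntwo$ forces $x\notin\RightM(T)$ (hence at most one non-crossing right-extension per such node), and bound the $\Q$ part by $\sum_{x\in\Q}\D_T(x)\leq\size(T)$. You merely spell out more explicitly the deduction $x\notin\M(T)$ and $x\in\LeftM(T)$ implies $x\notin\RightM(T)$, and flag the edge cases around the edit position that make ``non-crossing'' mean ``identical right-context''; the paper states these as given, so no new idea is involved.
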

}%
From now on we consider the second term $\wc(T)$.
A right-extension $a \in \Sigma$ of $x \in \Ntwo \cup \Q$
is said to be a \emph{new} right-extension of $x$ in $T'$,
if $xa$ does not occur in $T$
and $x$ has a crossing occurrence that immediately precedes $a$ in $T'$.

\begin{lemma}[Adapted from~\cite{HamaiFI2025}]
\label{lem:size_L}
There is an injection  from the
new right-extensions of all strings in $\Ntwo \cup \Q$
to the maximal repeats in $\M(T)$ (namely to the existing nodes in $\CDAWG(T)$),
implying $\wc(T) \leq |\M(T)|$.
\end{lemma}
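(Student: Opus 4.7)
The plan is to construct an explicit injection $\phi$ from the set of pairs $\{(x, a) : x \in \Ntwo \cup \Q, \; a \text{ is a new right-extension of } x\}$ into $\M(T)$, following the template of~\cite{HamaiFI2025} and extending the codomain from $\MR(T)$ to $\M(T)$. For each such pair $(x, a)$, I would fix a crossing occurrence $T'[j..k]$ of $x$ in $T'$ with $T'[k+1] = a$. Because $a$ lies strictly to the right of the edited position, the suffix $s$ of that occurrence consisting of the characters from the first position at or after the edit up to position $k$ corresponds (after re-indexing for the edit type) to a substring of $T$; consequently $sa$ occurs in $T$. I would then define $\phi(x, a)$ as the unique element of $\M(T)$ obtained by right-extending $sa$ to its right-maximal $\rrep_T(sa)$ and then left-extending in $T$ until left-maximal.

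To prove injectivity, I would assume $\phi(x_1, a_1) = \phi(x_2, a_2) = \mu$ and recover the pair. Since $\mu$ determines the $\EqrL_T$-class of its substrings, and since each $s_j a_j$ is anchored inside $\mu$ at the first post-edit position of its crossing occurrence, the character immediately following the anchor reveals $a_1 = a_2$ and the characters preceding $a$ within the left-extension yield $s_1 = s_2$. To conclude $x_1 = x_2$ it then suffices to identify the left part $p_j$ of each $x_j = p_j s_j$. These $p_j$ lie entirely to the left of the edit and hence are substrings of $T$; because each $x_j \in \Ntwo \cup \Q \subseteq \LeftM(T)$ is left-maximal in $T$, the length of $p_j$ is forced by extending $s_j$ leftward in $T$ until the point where left-maximality is first achieved, giving $p_1 = p_2$.

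The main obstacle I expect is the canonical choice of crossing occurrence when $x$ has several crossing occurrences that are all followed by the same new character $a$. I would resolve this by a tie-breaking rule analogous to the $S_{x_L}$ versus $S_{x_R}$ association used for $\None \cup \Nbase$ in Lemma~\ref{lem:N1edge}: prefer the crossing occurrence whose suffix $s$ is the longer, breaking remaining ties by leftmost. A secondary subtlety is the deletion case, where the positional correspondence between $T'$ and $T$ shifts by one and the crossing-occurrence definition is slightly different; the same construction goes through after appropriate re-indexing. Verifying that $\phi$ always lands in $\M(T)$ is automatic, since every substring of $T$ admits a unique maximal representative via the extend-right-then-left procedure; and once $\phi$ is shown to be well-defined and injective, the bound $\wc(T) \leq |\M(T)|$ follows immediately.
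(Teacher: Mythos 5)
The paper states this lemma with only the attribution ``Adapted from~\cite{HamaiFI2025}'' and supplies no proof, so there is no internal argument to compare against; I can only assess your proposal on its own merits. The high-level template --- map each new right-extension $(x,a)$ to a node of $\CDAWG(T)$ via the post-edit part $sa$ of the crossing occurrence, and use left-maximality of $x \in \Ntwo \cup \Q$ to recover $x$ --- is a plausible direction, but as written the argument has several genuine gaps.

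First, well-definedness of $\phi$. You assert ``$a$ lies strictly to the right of the edited position'' and conclude ``$sa$ occurs in $T$''; both can fail. Take an insertion or substitution introducing a character $c$ with $c \notin \Substr(T)$, and a crossing occurrence of $x$ touching $i$ from the left, i.e.\ $T'[j..i-1]$. Then $a = T'[i] = c$, the post-edit suffix $s$ is empty, and $sa = c$ does not occur in $T$, so $\rrep_T(sa)$ is undefined. Worse, every suffix $x$ of $T[1..i-1]$ lying in $\Ntwo \cup \Q$ has $c$ as a new right-extension via this crossing occurrence, so there may be many such pairs, and no single catch-all image can preserve injectivity. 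This case needs a separate mechanism, not a tie-break.

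Second, the injectivity step is underspecified. The map $u \mapsto$ ``extend $u$ to the right to $\rrep_T(u)$, then to the left until left-maximal'' is many-to-one: many substrings of $T$ round to the same $\mu \in \M(T)$. You claim that $s_j a_j$ is ``anchored inside $\mu$ at the first post-edit position of its crossing occurrence,'' but $\mu$ is a string of $T$; it carries no canonical crossing occurrence in $T'$ (indeed $\mu$ need not occur in $T'$ at all). From $\mu$ alone there is no way to read off $|s_j|$ or $a_j$, so the deductions $a_1 = a_2$ and $s_1 = s_2$ do not follow. Third, even granting $s_1 = s_2$ and $a_1 = a_2$, the recovery of $p_j$ by ``extending $s_j$ leftward in $T$ until left-maximality is first achieved'' presupposes that $x_j$ is the \emph{shortest} left-maximal left-extension of $s_j$. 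That is not automatic --- a shorter string $p' s_j$ with $p'$ a proper suffix of $p_j$ could also be left-maximal in $T$ --- and would need an argument tied to the specific crossing occurrence of $x_j$. Taken together, the proposal needs new ideas at each of these three points before it can be considered a proof.
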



Putting these results together, the following lemma holds:

\begin{lemma}
    \label{lem:dt2}
    $\sum_{x \in \Ntwo\cup\Q}\D_{T'}(x) \leq \remedge + |\M(T)| + |\Ntwo|$.
\end{lemma}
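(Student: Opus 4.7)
The plan is to combine the two upper bounds already developed in this subsection, namely the bound on $\uc(T)$ and the bound $\wc(T)\leq|\M(T)|$ from Lemma~\ref{lem:size_L}, via the decomposition $\sum_{x \in \Ntwo\cup\Qnew}\D_{T'}(x) = \uc(T) + \wc(T)$. Once this decomposition is justified for the index set $\Ntwo\cup\Qnew$ (rather than $\Ntwo\cup\Q$), the arithmetic is routine.

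First, I would justify that each out-edge of a node $x\in\Ntwo\cup\Qnew$ in $\CDAWG(T')$ falls into exactly one of two categories: (i) right-extensions $xa$ that appear at some non-crossing occurrence of $x$ in $T'$, which contribute to $\uc(T)$, and (ii) right-extensions $xa$ that appear only at crossing occurrences of $x$ in $T'$, which contribute to $\wc(T)$. This partition uses only the definition of crossing occurrence and the fact that an out-edge of $x$ corresponds to a distinct character following $x$ somewhere in $T'$. Note that restricting the index set from $\Ntwo \cup \Q$ to $\Ntwo\cup\Qnew$ does not change the decomposition, because nodes in $\Q\setminus\Qnew$ have \emph{no} new right-extensions (so their contribution to $\wc$ is zero), while their contribution to $\uc$ is still bounded by $\D_T(x)$ as in $\Q$.

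Next, I would invoke the preceding lemma to obtain $\uc(T) \leq |\Ntwo| + \size(T)$. The argument there splits $\uc(T)$ according to whether $x \in \Ntwo$ or $x \in \Q$: nodes in $\Ntwo$ are not right-maximal in $T$ and hence contribute at most $1$ each to non-crossing right-extensions, giving the $|\Ntwo|$ term; nodes in $\Q = \MR(T') \cap \MR(T)$ have their non-crossing right-extensions bounded by $\D_T(x)$, and summing over all maximal repeats of $T$ gives at most $\size(T)$. Then Lemma~\ref{lem:size_L} directly yields $\wc(T)\leq|\M(T)|$ via the stated injection from new right-extensions into $\M(T)$.

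Finally, I would add these two bounds to conclude
\[
\sum_{x \in \Ntwo\cup\Qnew}\D_{T'}(x) = \uc(T) + \wc(T) \leq \bigl(|\Ntwo| + \size(T)\bigr) + |\M(T)| = \size(T) + |\M(T)| + |\Ntwo|.
\]
The only non-trivial step is confirming that the partition into $\uc$-edges and $\wc$-edges is consistent when we drop from $\Q$ to $\Qnew$; but since this restriction only removes contributions that are already zero on the $\wc$ side and are still absorbed into the $\size(T)$ bound on the $\uc$ side, the bound goes through without loss.
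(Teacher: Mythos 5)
Your proposal is correct and follows essentially the same route as the paper: decompose $\sum_{x \in \Ntwo\cup\Qnew}\D_{T'}(x)$ into $\uc(T) + \wc(T)$, bound $\uc(T) \leq |\Ntwo| + \size(T)$ via the preceding (unlabeled) lemma, bound $\wc(T) \leq |\M(T)|$ via Lemma~\ref{lem:size_L}, and add. Your extra remark clarifying why passing from $\Q$ to $\Qnew$ is harmless is a reasonable addition, since the paper's own text is slightly inconsistent on whether $\uc$ and $\wc$ are indexed over $\Q$ or $\Qnew$.
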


The following upper bound is known for $\Nadd$:
\begin{lemma}[Reformulation of Lemma 15 of~\cite{HamaiFI2025}]
    \label{lem:dt3}
    $\sum_{x \in \Nadd}\D_{T'}(x) \leq 2|\MR(T)|$.
\end{lemma}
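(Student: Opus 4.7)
The plan is to derive the stated bound by a two-step argument: first, show that every $x \in \Nadd$ satisfies $\D_{T'}(x)\leq 2$; second, construct an injection $\phi \colon \Nadd \hookrightarrow \MR(T)$. Multiplying these yields $\sum_{x \in \Nadd} \D_{T'}(x) \leq 2|\Nadd| \leq 2|\MR(T)|$.

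For the per-node bound, I would exploit condition~(2) of $\Nadd$, which asserts that every distinct right-extension of $x$ in $T'$ is realised by some crossing occurrence of $x$. All crossing occurrences live in the window $[i-|x|,\,i+1]$, so their following characters are read at positions in $[i,\,i+|x|+1]$. Because $x \in \Nthree$ implies $x \notin \RightM(T)$, there is at most one distinct right-extension character $a$ of $x$ in $T$, and every non-crossing occurrence of $x$ in $T'$ is still followed by $a$. I expect that the two extremal crossing occurrences $x_L$ and $x_R$ account for all distinct right-extension characters in $T'$: any crossing occurrence strictly between $x_L$ and $x_R$ overlaps one of them, so its following position is either unedited (and hence agrees with the value in $T$, giving $a$) or is forced by the overlap to match the character following $x_L$ or $x_R$. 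This gives $\D_{T'}(x)\leq 2$.

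For the cardinality bound, I would adapt the construction of the map $U(\cdot)$ in Lemma~\ref{lem:N1edge}. Given $x \in \Nadd$, pick the representative non-edit-side substring $S_{x_L}$ or $S_{x_R}$ and extend it leftward in $T$ until it becomes left-maximal, then rightward in $T$ until it becomes right-maximal, to obtain a candidate $\phi(x) \in \M(T)$. I would check that $\phi(x) \neq T$ -- otherwise the maximality witness would have propagated through the edit site and $x$ would have landed in $\None \cup \Ntwo$ rather than $\Nthree$ -- so $\phi(x) \in \MR(T)$. Injectivity should follow because the offset at which $x$ sits inside $\phi(x)$ is pinned down by the edit position $i$ together with the extremal occurrence ($x_L$ or $x_R$) used for the extension.

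The main technical obstacle will be establishing injectivity of $\phi$. Elements of $\Nadd$ are new maximal substrings whose maximality is an artefact of the edit, so in principle many distinct $x \in \Nadd$ could be embedded inside a single maximal repeat of $T$. To rule out collisions I will need to use condition~(2) together with the extremality of $x_L$ and $x_R$ to argue that each $x$ is uniquely anchored to a specific position-and-edit-type configuration inside $\phi(x)$, with separate bookkeeping for insertion, deletion, and substitution. This combinatorial anchoring step is exactly the place where the reformulation of Lemma~15 of Hamai et al. becomes nontrivial, and I expect it to mirror the local reasoning used in Lemmas 2--5 of that work.
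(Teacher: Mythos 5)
The statement you are proving is cited in the paper as ``Reformulation of Lemma~15 of \cite{HamaiFI2025}'' and is introduced with ``The following is known:'' --- the paper gives no proof of it at all, so there is no internal proof to compare against. I therefore have to judge your attempt purely on its own internal correctness, and there it has genuine gaps.

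Your planned decomposition into a per-node bound $\D_{T'}(x)\leq 2$ and an injection $\Nadd \hookrightarrow \MR(T)$ is a reasonable guess at the shape of the argument, and the per-node bound is in fact true. But the reason you give for it does not hold up. You argue that an intermediate crossing occurrence $T'[j..k]$ overlaps one of $x_L$, $x_R$, and that this overlap ``forces'' the following character $T'[k+1]$ to match the character following $x_L$ or $x_R$. That is not what an overlap gives you. If $T'[j..k]$ overlaps $x_R = T'[j_R..k_R]$, the matching region is $T'[j_R..k]$; the resulting period relation pins $T'[k+1]$ to a character \emph{inside} $x$ (namely $x[m+1-(j_R-j)]$, where $m=|x|$), not to $T'[k_R+1]$, which lies strictly outside the occurrence at $j_R$ and is unconstrained by the overlap. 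Likewise the ``unedited so it agrees with the value in $T$, giving $a$'' clause does not follow: a crossing occurrence need not correspond to an occurrence of $x$ in $T$, so $T'[k+1]=T[k]$ says nothing about $a$. The bound $\D_{T'}(x)\leq 2$ does hold, but for a different reason: with three or more crossing occurrences, their consecutive gaps are all periods of $x$ whose sum is at most $|x|+1$, and a Fine--Wilf argument forces all their right-extension characters except the last to coincide. That periodicity step is the actual content, and your sketch omits it.

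The cardinality half, $|\Nadd|\leq|\MR(T)|$, you explicitly leave as ``the main technical obstacle,'' observing only that something like the $U(\cdot)$ map of Lemma~\ref{lem:N1edge} might work and that injectivity needs case analysis by edit type. As written this is a statement of intent, not a proof. Note also that Lemma~\ref{lem:N1edge} as stated in the paper applies to $\None\cup\Nbase$, which is by definition disjoint from $\Nadd$; the extension to $S_{x_L}$ or $S_{x_R}$ for $x\in\Nadd$ (where, for an insertion, these strings contain the inserted character and need not be substrings of $T$ at all) is precisely where an analogue would have to be rebuilt, and you do not address that difficulty. So both halves of the proposal currently have holes, one of concrete incorrectness and one of incompleteness.
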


Recall $|\MR(T)| = |\M(T)\setminus \{T\}| = |\M(T)|-1$.
By Lemmas~\ref{lem:dt1}, \ref{lem:dt2}, and~\ref{lem:dt3}
we get:
\begin{theorem}
    \label{lem:arb_edges}
$\size(T')-\size(T) \leq \remedge + 5|\MR(T)| - \sum_{x \in \None \cup \Nbase}(\Ndelta(x)) + |\Ntwo| + 3$.
\end{theorem}

\section{Tighter bound for $\AS(\Gsize, n)$}


This section analyzes the second half
  $\vone(T) - \vone(T')$
of Equation~\ref{eqn:grammar_size}
between the original string $T$ and the edited string $T'$.
By combining it with the result from Section~\ref{sec:edges},
we will obtain a tighter bound $\AS(\Gsize, n) \leq 8\Gsize + 4$ (Corollary~\ref{coro:main}).

\noindent \textbf{Difference in the number of nodes of in-degree one:}
We consider the number of $\G(T)$ \emph{existing nodes whose in-degrees are changed from one after the edit}.
Clearly $\vone(T) - \vone(T') \leq \G(T)$ holds.

Let $\I_T(x) = \{y \mid (y,x) \in \mathsf{E}_T\}$,
where $\mathsf{E}_T$ is the set of edges of $\CDAWG(T)$.
Let $x\in \MR(T)$ be a node of in-degree one
in the CDAWG for the original string $T$ before the edit.
We consider the following two cases:
\begin{description}
    \item[Case 1:] $x$ is not a maximal repeat in $T'$ after the edit (the node corresponding to $x$ is deleted in $\CDAWG(T')$).
    \item[Case 2:] $x$ has two or more in-edges in $\CDAWG(T')$ after the edit.
\end{description}
Let $\VO$ and $\mathsf{V}_2$ denote the sets of strings $x$ in Cases 1 and 2, respectively.

\begin{figure}[t]
    \centering
    \includegraphics[keepaspectratio,scale=0.45]{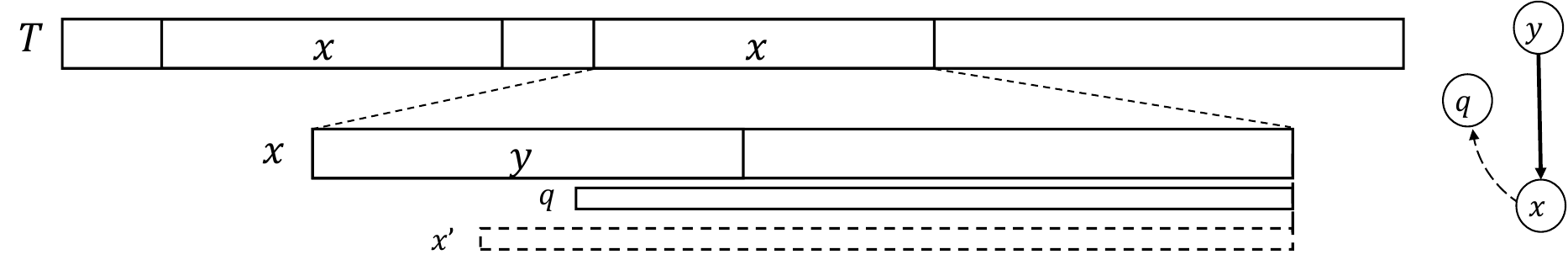}
    \caption{Illustration for the strings $y$ and $q$ for a string $x \in \mathsf{V}_2(T)$ and the relation of their nodes, where the dashed arc represents the suffix link. Any proper suffix $x'$ of $x$ that is longer than $q$ is not a maximal repeat of $T$.}
    \label{fig:arb_yq}
\end{figure}

In Case 2, we further define $y$ as the (only) parent of a node $x \in \mathsf{V}_2$
in the CDAWG of the original string $T$.
Also, let $q$ be the longest proper suffix of $x$ that is a maximal repeat in $T$.
See Fig.~\ref{fig:arb_yq}.
We categorize strings $x \in \mathsf{V}_2$ using the unique parent $y$ and its suffix $q$,
as follows:

\begin{description}
    \item[Case 2(a):] $y\notin \MR(T')$ (the node for $y$ is deleted in $\CDAWG(T')$).
    \item[Case 2(b):] $y\in \MR(T')$ and $q \notin \MR(T')$ (the node for $q$ is deleted in $\CDAWG(T')$).
    \item[Case 2(c):] $y,q \in \MR(T')$ (both of the nodes for $y$ and $q$ exist in $\CDAWG(T')$).
\end{description}

Let $\VA$, $\VB$, and $\VC$ denote the sets of strings $x \in \MR(T)$ which are in Case 2(a), Case 2(b), and Case 2(c), respectively.

\subsection{Cases 1, 2(a), 2(b)}
For each $x\in \VO\cup\VA\cup\VB$, we define $f(x) \in \MR(T)$
as follows:
(1) $f(x) = x$ if $x\in \VO$,
(2) $f(x) = y$ if $x\in \VA$, and
(3) $f(x) = q$ if $x\in \VB$.
Also, let $f(V) = \{f(x) \mid x\in V\}$ for each $V \in \{\VO, \VA, \VB\}$.

By definitions, the node for $f(x)$ and its out-edges are deleted in $\CDAWG(T')$.
Since we have assumed that $T$ terminates with a unique end-marker $\$$,
$\D_T(f(x)) \geq 2$.
Thus, at least $2|f(\VO)\cup f(\VA) \cup f(\VB)|$ edges are removed.

\hfnote*{added}{%
By the definition of $\remedge$, we have that:
\begin{lemma}
    \label{lem:size_eprime}
    At least $|\VO| + |\VA| + |\VB|$ existing edges are removed after edits. Then, $\size(T) \geq \remedge + (|\VO| + |\VA| + |\VB|)$.
\end{lemma}
}%

For $f(x)$, we have the following lemmas.

\begin{lemma}
    \label{lem:y_injective}
    If $x_1, x_2 \in \VA$ and $x_1 \neq x_2$, then $y_1 \neq y_2$.
\end{lemma}

\begin{proof}
    Suppose that $y_1 = y_2$. Then $y_1$ is the longest proper prefix of both $x_1$ and $x_2$ that is a maximal repeat in $T$,
    which implies that $a \neq b$ where $a,b \in \Sigma$, $y_1a$ is a prefix of $x_1$, and $y_1b$ is a prefix of $x_2$.
    Since $x_1, x_2$ are substrings of $T'$ after the edit, $y_1 \in \M(T')$, which is a contradiction.
\end{proof}

By symmetry, the following lemma also holds.

\begin{lemma}
    \label{lem:q_injective}
    If $x_1, x_2 \in \VB$ and $x_1 \neq x_2$, then $q_1 \neq q_2$.
\end{lemma}

\begin{lemma}
    \label{lem:yq_injective}
    If $x_1 \in \VA$ and $x_2 \in \VB$, then $f(x_1) \neq f(x_2)$.
\end{lemma}

\begin{proof}
    Suppose that $f(x_1) = f(x_2)$.
    By definition, $f(x_1)$ is a prefix of $x_1$ and a suffix of $x_2$.
    Since $x_1, x_2\in \M(T')$, $f(x_1) \in \M(T')$, which is a contradiction. 
\end{proof}

By Lemmas~\ref{lem:y_injective} and \ref{lem:q_injective}, there is an injective mapping between $V$ and $f(V)$ for each $V \in \{\VO, \VA, \VB\}$.
Thus, $|V| = |f(V)|$ holds for each $V$.
Also, for each element $x$, $x$ can be counted twice in $f(\VO)$ and $f(\VA)$ or $f(\VB)$,
while there is no $x \in f(\VA) \cap f(\VB)$ by Lemma~\ref{lem:yq_injective}.

As a result, $|\VO|+|\VA|+|\VB| \leq 2|f(\VO)\cup f(\VA) \cup f(\VB)|$ holds (see Fig.~\ref{fig:func}).
The number of removed edges is at least the quantity on the right-hand side,
which implies that at least $|\VO|+|\VA|+|\VB|$ edges are removed.

\begin{figure}[t]
    \centering
    \includegraphics[keepaspectratio,scale=0.50]{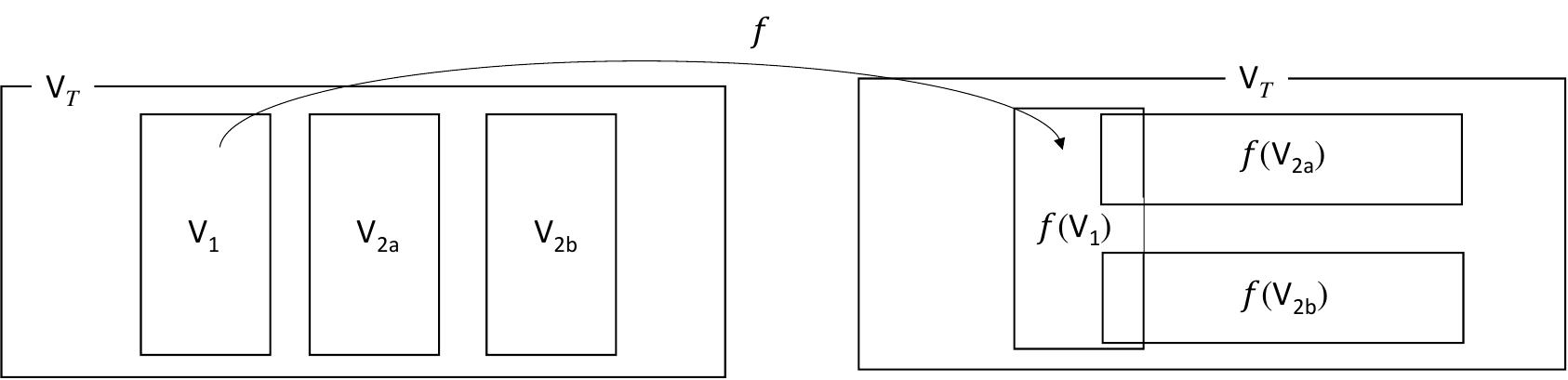}
    \caption{Illustration for $V$ and $f(V)$. the size of $V$ is the same as the size of $f(V)$ for each $V$.}
    \label{fig:func}
\end{figure}

\subsection{Case 2(c)}
We now consider the case $x\in \VC$. If the in-degree of $x$ is at least 2 in $\CDAWG(T')$, 
the following lemma holds:

\begin{lemma}
    \label{lem:x_inedge}

    If $y_1,y_2 \in \I_{T'}(x)$, at least one of $y_1, y_2$ has a non-prefix occurrence in $x$.
\end{lemma}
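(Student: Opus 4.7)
The plan is to argue by contrapositive: if both $y_1$ and $y_2$ occur in $x$ only as a prefix, I will derive a contradiction using the fact that from $y_1$ there is at most one outgoing CDAWG edge per direction.

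First I would unpack what an in-edge to $x$ forces on its source. An edge $y_i \to x$ labeled $\alpha_i$ in $\CDAWG(T')$ means that $y_i \alpha_i$ belongs to the endpos equivalence class of $x$ and is therefore a suffix of $x$; the embedded occurrence of $y_i$ inside that suffix sits at position $|x|-|y_i \alpha_i|+1$ of $x$. Under the prefix-only assumption this position must equal $1$, giving $y_i \alpha_i = x$ and making each $y_i$ a prefix of $x$. Relabel so that $|y_1|<|y_2|$; then $y_1$ is a proper prefix of $y_2$, which in turn is a proper prefix of $x$, and the first character of $\alpha_1$ is $a := x[|y_1|+1] = y_2[|y_1|+1]$.

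Next I would trace the unique outgoing CDAWG edge of $y_1$ in direction $a$ at the level of endpos equivalence classes. Reading $\alpha_1 = x[|y_1|+1..|x|]$ one character at a time from the endpos class of $y_1$, the character-wise transition at step $i$ enters the endpos class containing the prefix $x[1..|y_1|+i]$; the CDAWG edge absorbs such an intermediate class only when its longest representative lies outside $\M(T')$. At step $|y_2|-|y_1|$ the visited class contains $y_2$, whose longest representative is $y_2$ itself (as $y_2 \in \M(T')$), so this class is a genuine CDAWG node and cannot be collapsed. Hence the CDAWG edge out of $y_1$ in direction $a$ terminates at or before the $y_2$-node, landing at some maximal string $w$ with $|w| \leq |y_2| < |x|$.

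The main obstacle is ruling out the subcase in which compaction halts strictly earlier, at a class containing $x[1..|y_1|+j]$ with $j < |y_2|-|y_1|$ whose longest representative $w_j$ might a priori coincide with $x$ via a forced left-extension. I would dispose of this with a finite-counting argument: if $w_j = x$ then $\EndPos(x, T') = \EndPos(x[1..|y_1|+j], T')$, and since $x[1..|y_1|+j]$ is a prefix of $x$, every $p \in \BegPos(x, T')$ must satisfy $p - (|x|-|y_1|-j) \in \BegPos(x, T')$; iterating this strictly positive shift produces an infinite descending sequence inside the finite set $\BegPos(x, T')$, which is impossible. Combining the two subcases, the CDAWG edge out of $y_1$ in direction $a$ lands at a node different from $x$, contradicting the assumed edge $y_1 \to x$. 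The shift-iteration argument is the delicate step; everything else is structural bookkeeping with endpos classes in $\CDAWG(T')$.
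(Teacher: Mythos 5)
Your proof is correct and follows the same overall strategy as the paper: assume for contradiction that both $y_1$ and $y_2$ occur in $x$ only as its prefix, conclude (WLOG with $|y_1|<|y_2|$) that $y_1,y_2$ are prefixes of $x$, set $a=x[|y_1|+1]$, and show that the unique CDAWG edge leaving $y_1$ in direction $a$ cannot land on $x$. Both proofs hinge on the observation that $\rrep_{T'}(y_1a)$, the string spelled along that edge, is a prefix of $y_2$ and hence a proper prefix of $x$.

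Where you part ways is the closing step. The paper notes that for $(y_1,x)$ to be an edge, $\rrep_{T'}(y_1a)$ must share $x$'s endpos set and therefore be a suffix of $x$; since it is shorter than $x$, this forces a non-prefix occurrence of $y_1$ in $x$, contradicting the hypothesis directly. You instead derive the contradiction from a standalone fact — a proper prefix of $x$ cannot have the same endpos set as $x$ — proved by the finite-descent argument on $\BegPos(x,T')$. Both closings are valid; the paper's is slightly more direct and reuses the prefix-only hypothesis, whereas your descent argument is hypothesis-free at that point and is a clean reusable fact in its own right.

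One small expositional blemish worth fixing: the assertion in your second paragraph that the edge lands at ``some maximal string $w$ with $|w|\leq|y_2|$'' is not literally true, since the node is the $\EqrL_{T'}$-longest string in the class of $\rrep_{T'}(y_1a)$ and can be strictly longer than $|y_2|$ via left-extension. You do acknowledge and repair exactly this in the next paragraph, so the argument is sound, but the earlier sentence should be weakened to say only that the spelled string $\rrep_{T'}(y_1a)$ has length at most $|y_2|$, leaving the identity of the node open until the finite-descent step resolves it.
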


\begin{proof}
  Assume, to the contrary, that each of $y_1$ and $y_2$ occurs 
  exactly once in $x$, as a prefix of $x$ (see Fig.~\ref{fig:x_prefix}).
  Without loss of generality, let $|y_1| < |y_2|$.
  Let $a$ be the character such that $y_1a$ is a prefix of $x$.
  Since $y_1$ and $y_2$ are prefixes of $x$,
  the right-representative $\rrep_{T'}(y_1a)$ of $y_1a$ is not longer than $y_2$
  and thus it is not $x$.
  If there is no other occurrence of $y_1$ in $x$, then $(y_1, x) \notin \mathsf{E}_{T'}$, but this is a contradiction.
    Thus, at least one of $y_1, y_2$ has a non-prefix occurrence in $x$.
\end{proof}

By Lemma~\ref{lem:x_inedge}, if $\I_{T'}(x)$ has two elements, then at least one substring of $x$ has a non-prefix occurrence in $x$. 
Now, for $k \geq 0$, we define $z_k$ and $P_{z_k}$ as follows:
    \begin{itemize}
       \item $z_0$ is the maximal repeat in $\I_{T'}(x)$ whose ending position within $x$, denoted $j_{z_0}^{(x)}$, is the rightmost among those in $\I_{T'}(x)$.
        \item $P_{z_k}$ is the maximal repeat which is the longest prefix of $z_k$ where $P_{z_k} \in \I_{T'}(z_k)$.
        \item $z_{k+1}$ is the maximal repeat in $\I_{T'}(z_k)$ whose ending position within $z_k$ in $x$, denoted $j_{z_{k+1}}^{(x)}$, is the rightmost among those in $\I_{T'}(z_k)$.
    \end{itemize}
If there exist two or more maximal repeats with the same rightmost ending position, then the shortest one is selected as $z_k$.
    $P_{z_k}$ and $z_{k+1}$ are undefined when $z_k$ is a suffix of $y$ or $z_k$ is in-degree 1.  See also Fig.~\ref{fig:z_Pz}.

\begin{figure}[t]
    \centering
    \includegraphics[keepaspectratio,scale=0.55]{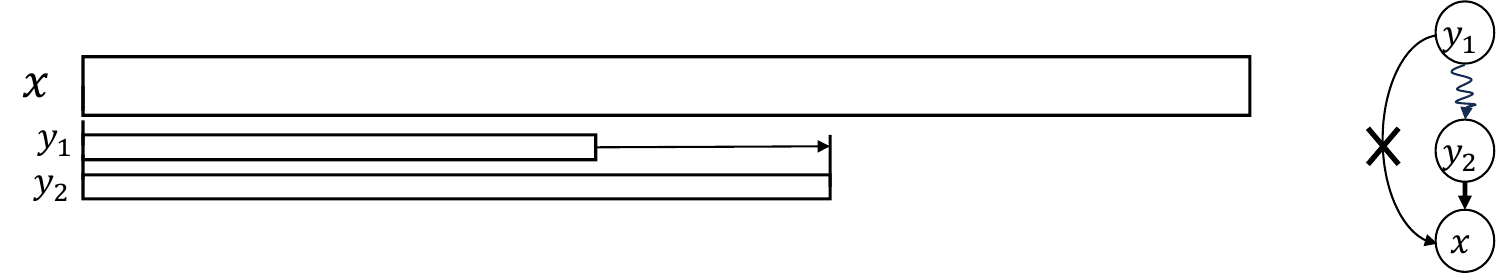}
    \caption{Illustration for the proof of Lemma~\ref{lem:x_inedge}. The case does not occur.}
    \label{fig:x_prefix}
\end{figure}




\begin{figure}[t]
    \centering
    \includegraphics[keepaspectratio,scale=0.48]{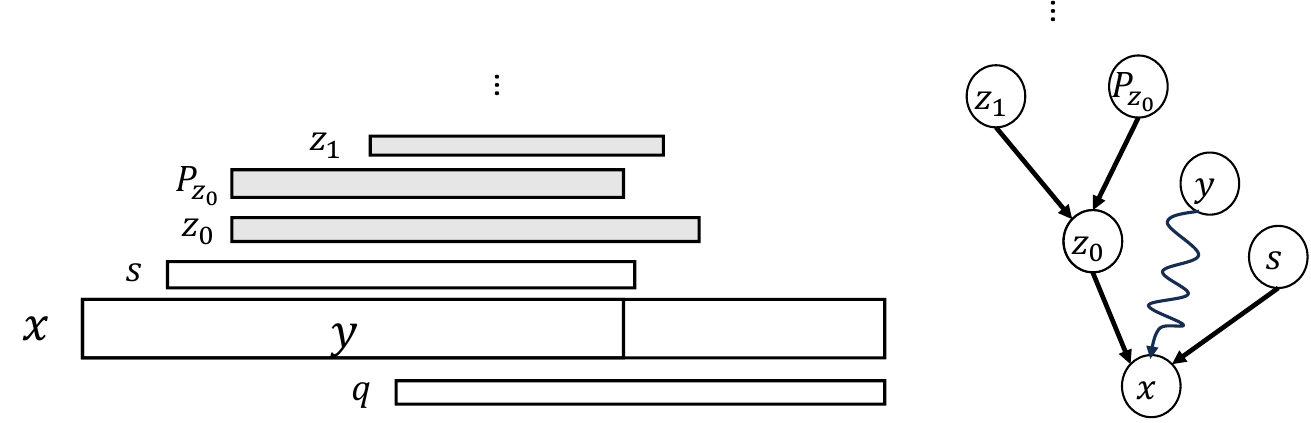}
    \caption{Illustration for $z_k$ and $P_{z_k}$ in $x$ and $\CDAWG(T')$. 
    Both $s$ and $z_0$ are elements of $I_{T'}(x)$, then $z_0$ is defined whose occurrence ends at the rightmost position.}
    \label{fig:z_Pz}
\end{figure}

For any $k$, let $i_{z_k}^{(x)}$ be the beginning position of $z_k$ in $x$.
This definition implies that $x$ has in-edges from $y$ and $z_0$, and $z_k$ has in-edges from $z_{k+1}$ and $P_{z_k}$ in $\CDAWG(T')$.
Since $(z_{k+1}, z_k) \in \mathsf{E}_{T'}$, $|z_k| > |z_{k+1}|$ holds. 
Therefore, the recursive construction of $P_{z_k}$ and $z_{k+1}$ stops at some $z_k$.

Let $i_y^{(x)}$ and $j_y^{(x)}$ (resp. $i_q^{(x)}$ and $j_q^{(x)}$) denote the beginning and ending positions of $y$ (resp. $q$) as prefix (resp. suffix) of $x$.
For $q$ and $z_k$, we have:

\begin{lemma}
    \label{lem:position_of_z}
    For $k \geq 0$, $i_{z_k}^{(x)} < i_q^{(x)} < j_y^{(x)} \leq j_{z_k}^{(x)}$ holds.
\end{lemma}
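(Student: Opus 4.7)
The plan is to argue by induction on $i \geq 0$, with all three inequalities $i_{z_i}^{(x)} < i_q^{(x)}$, $i_q^{(x)} < j_y^{(x)}$, and $j_y^{(x)} \leq j_{z_i}^{(x)}$ forming the hypothesis. The middle inequality is a structural fact about the pair (parent, suffix-link target) inside $\CDAWG(T)$ that does not depend on $i$; the other two carry the actual inductive content.

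For the base case ($i = 0$), the rightmost inequality $j_y^{(x)} \leq j_{z_0}^{(x)}$ is immediate from Definition~\ref{def:z_i}: $z_0$ is the element of $\I_{T'}(x)$ whose occurrence in $x$ has the rightmost ending position, and $y \in \I_{T'}(x)$ by the discussion preceding Definition~\ref{def:z_i}. For $i_q^{(x)} < j_y^{(x)}$, which is equivalent to $|y| + |q| > |x|$ when $j_y^{(x)}$ refers to the prefix occurrence of $y$, we use that $q$ is the suffix-link target of $x$ in $\CDAWG(T)$ and that $y$ is its unique parent there: if instead $|q| \leq |x|-|y|$, then the label $a\beta$ of the edge $(y,x)$ would itself be a proper suffix of $x$ lying in $\M(T)$ and at least as long as $q$, contradicting the maximality of $|q|$. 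For the leftmost inequality $i_{z_0}^{(x)} < i_q^{(x)}$, suppose for contradiction that the rightmost occurrence of $z_0$ starts at or after $i_q^{(x)}$, so that this occurrence is contained inside the suffix $q$ of $x$; combining Lemma~\ref{lem:x_inedge} (one of $y, z_0$ must have a non-prefix occurrence in $x$) with the fact that the edge $(z_0, x) \in \mathsf{E}_{T'}$ forces $z_0$ to have a prefix occurrence in $x$ whose right-extension reaches all of $x$, we derive a structural contradiction with $z_0$ being confined to the tail $q$.

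For the induction step, observe that Definition~\ref{def:z_i} builds $z_{i+1}$ from $z_i$ by exactly the same recipe that builds $z_0$ from $x$, with $P_{z_i}$ playing the role of $y$. Applying the base-case argument with $z_i$ in place of $x$ yields the analogous inequalities inside $z_i$; translating these from $z_i$-coordinates to $x$-coordinates via the embedding $z_i = x[i_{z_i}^{(x)}..j_{z_i}^{(x)}]$, and combining with the inductive hypothesis, gives the three inequalities for $z_{i+1}$. The crucial use of the hypothesis is $j_y^{(x)} \leq j_{z_i}^{(x)}$, which places the region of interest inside $z_i$'s rightmost occurrence in $x$ and thereby links the local inequalities inside $z_i$ with the global ones inside $x$.

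The main obstacle is the base-case inequality $i_{z_0}^{(x)} < i_q^{(x)}$: it is not obvious that the rightmost occurrence of $z_0$ cannot be trapped inside the suffix $q$ of $x$, and ruling this out requires carefully combining Lemma~\ref{lem:x_inedge} with the CDAWG-edge condition $(z_0, x) \in \mathsf{E}_{T'}$. In the induction step, the subtle point is verifying that the rightmost occurrence of $z_{i+1}$ in $x$ is actually attained within $z_i$'s chosen occurrence, rather than at some unrelated occurrence of $z_{i+1}$ elsewhere in $x$; this too follows from $j_y^{(x)} \leq j_{z_i}^{(x)}$ together with the definition of $z_{i+1}$ as rightmost-ending inside $z_i$.
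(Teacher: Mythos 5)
Your overall skeleton (induction on $i$, three inequalities) matches the paper, but the mechanisms you supply for the two hard inequalities are either incorrect or too vague, and your induction step misses the key structural tool the paper uses.

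For the middle inequality $i_q^{(x)} < j_y^{(x)}$: you argue that if $|q| \leq |x|-|y|$ then the edge label $a\beta$ of $(y,x)$ would be a proper suffix of $x$ ``lying in $\M(T)$,'' contradicting the maximality of $q$. But an edge label of a CDAWG is in general not a maximal repeat --- it is just whatever string closes the gap to the next right-maximal extension, and there is no reason for it to be left-maximal. So this step does not go through. (The paper in fact does not re-derive this inequality inside the proof at all, but the justification you give for it is independently flawed.)

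For the leftmost inequality $i_{z_0}^{(x)} < i_q^{(x)}$: your ``structural contradiction with $z_0$ being confined to the tail $q$'' is not a proof. The paper's actual mechanism is concrete and different in kind: let $z'_0 = x[j_{z_0}^{(x)}+1..|x|]$ be the label of the edge $(z_0,x)$; if $i_{z_0}^{(x)} \geq i_q^{(x)}$ then $z_0 z'_0$ is a suffix of $q$, which would force $z'_0$ to be the label of an edge $(z_0,q)$ --- but $z_0$ can only have one out-edge starting with the character $z'_0[1]$, and it already goes to $x \neq q$, a contradiction. You also invoke Lemma~\ref{lem:x_inedge} here and speak of a ``prefix occurrence of $z_0$ whose right-extension reaches all of $x$,'' but $z_0$ need not occur as a prefix of $x$; this misreads what the edge $(z_0,x)$ gives you.

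For the induction step, ``apply the base case to $z_i$ in place of $x$ and translate coordinates'' does not close the argument, because the inequality you must prove is about $q$ (the suffix of $x$), not about the longest maximal-repeat suffix of $z_i$; moreover $z_i$ has two parents ($P_{z_i}$ and $z_{i+1}$), so it is not in the ``unique parent'' situation that $x$ is in, and the base-case setup does not carry over as-is. The paper's key idea, which you do not use, is to look at the overlap $s_{z_i}$ of $z_i$ and $q$ inside $x$: since $z_i, q \in \MR(T')$, their overlap $s_{z_i}$ is also in $\MR(T')$, so if $z_{i+1}$ were contained in $s_{z_i}$ then every path from $z_{i+1}$ to $z_i$ would pass through the node $s_{z_i}$, contradicting $(z_{i+1},z_i) \in \mathsf{E}_{T'}$. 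The same ``contained in a maximal repeat $\Rightarrow$ path must pass through it'' argument handles $j_y^{(x)} \leq j_{z_{i+1}}^{(x)}$ with $y$ in place of $s_{z_i}$. Without the overlap-is-a-maximal-repeat observation your induction step is missing its central ingredient.
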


\begin{figure}[t]
    \centering
    \includegraphics[keepaspectratio,scale=0.54]{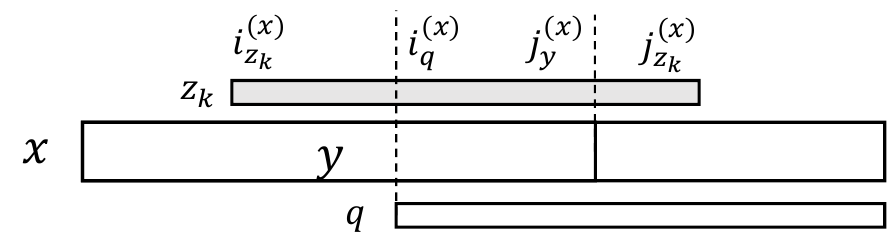}
    \caption{Illustration for Lemma~\ref{lem:position_of_z}.}
    \label{fig:position_of_z}
\end{figure}

\begin{proof}
  We prove the lemma by induction
  (see also Fig.~\ref{fig:position_of_z}).

  Let $z'_0 = x[j_{z_0}^{(x)}+1 .. |x|]$. Then $z_0z'_0$ is a suffix of $x$, and $z'_0$ is the label of the edge $(z_0, x)$ in $\CDAWG(T')$.
        Assume on the contrary that $i_{z_0}^{(x)} \geq i_q^{(x)}$. Then $z_0z'_0$ is also a suffix of $q$ and $z'_0$ is the label of the edge $(z_0, q)$, which is a contradiction.
        Assume on the contrary that $j_y^{(x)} > j_{z_0}^{(x)}$. Then $z_0$ is a substring of $y$. Since $y\in \MR(T')$, every path from $z_0$ to $x$ goes through the node $y$.
        This implies that $z_0$ does not have an out-edge to $x$, a contradiction.
        By the definition of $q$, any suffix of $x$ that is longer than $q$ is represented
        by the same node as $x$.
        Since $x$ has in-degree one in $\CDAWG(T)$ and its unique in-neighbor is $y$,
        $y$ and $q$ must overlap in $x$. Hence $i_q^{(x)}< j_y^{(x)}$.      
        Thus, $i_{z_0}^{(x)} < i_q^{(x)}< j_y^{(x)} \leq j_{z_0}^{(x)}$.

        If $i_{z_k}^{(x)} < i_q^{(x)} < j_y^{(x)} \leq j_{z_k}^{(x)}$ holds and $z_{k+1}$ is generated, then $j_y^{(x)} < j_{z_k}^{(x)}$ holds by definition.
        Assume on the contrary that $i_{z_{k+1}}^{(x)} \geq i_q^{(x)}$. Let $s_{z_k}$ denote the overlap between $z_k$ and $q$. Then $z_{k+1}$ is a substring of $s_{z_k}$. 
        Since $z_k, q \in \MR(T')$, $s_{z_k} \in \MR(T')$.
        Thus, every path from $z_{k+1}$ to $z_k$ goes through the node $s_{z_k}$
        (and thus $z_{k+1}$ does not have an out-edge to $z_k$), a contradiction.
        Assume on the contrary that $j_y^{(x)} > j_{z_{k+1}}^{(x)}$. Then $z_{k+1}$ is a substring of $y$. Since $y\in \MR(T')$, every path from $z_{k+1}$ to $x$ goes through the node $y$.
        This implies that $z_{k+1}$ does not have an out-edge to $z_k$, which is a contradiction.
        Thus, $i_{z_{k+1}}^{(x)} < i_q^{(x)}< j_y^{(x)} \leq j_{z_{k+1}}^{(x)}$.
\end{proof}

Consider the string $z_k$ such that $z_{k+1}$ is not generated. Let $\zl = z_k$.
In the sequel, we consider two sub-cases of Case 2(c):


\subsection{Case 2(c)-1: $\zl$ has only a single in-edge and it is not a suffix of $y$}
Let $\Vbt(T)$ denote the set of strings $x$ in Case 2(c)-1.
We remark that $\zl$ has only a single in-edge, and $\zl$ is a new node whose in-degree is one in $\CDAWG(T')$, instead of $x$.
Also, we have the following lemma:

\begin{lemma}
    \label{lem:\zl_injective}
    If $x_1 \neq x_2$ and $z_l(x_1), z_l(x_2)$ are in Case 2(c)-1, then $z_l(x_1) \neq z_l(x_2)$.
\end{lemma}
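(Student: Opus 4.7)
The plan is a proof by contradiction: suppose there are distinct $x_1, x_2 \in \Vbt(T)$ with $z_l(x_1) = z_l(x_2) = z$.

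First I would record the structural setup. For each $k \in \{1,2\}$, the chain $x_k \leftarrow z_0(x_k) \leftarrow \cdots \leftarrow z_l(x_k) = z$ in $\CDAWG(T')$ makes $z$ a substring of $x_k$, and by Lemma~\ref{lem:position_of_z} this occurrence satisfies $i_z^{(x_k)} < i_{q(x_k)}^{(x_k)} < j_{y(x_k)}^{(x_k)} \leq j_z^{(x_k)}$. Thus each $x_k$ decomposes as a left-extension of $z$ (passing through its unique parent $y(x_k)$) concatenated with a right-extension (passing through $q(x_k)$), and this decomposition is pinned down once the occurrence of $z$ inside $x_k$ is fixed.

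The key move is to invert the chain construction starting from $z$ and argue that $x_k$ is recovered uniquely. Because we are in Case~(I), $z$ has in-degree one in $\CDAWG(T')$, so the leftward extension of $z$ by a single character is canonical and common to both $x_1$ and $x_2$. At each downward step I would pick, among the out-neighbors of the current node, the one for which the current node is the rightmost-ending in-neighbor in $\CDAWG(T')$ (breaking ties using the ``shortest'' rule in Definition~\ref{def:z_i} read backwards), attempting to reconstruct $z_{l-1}, \ldots, z_0, x$ deterministically and hence force $x_1 = x_2$, contradicting the assumption.

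The hard part will be justifying the uniqueness of this reverse step, since the forward rule defining $z_{i+1}(x)$ is local to $z_i(x)$, and a priori two distinct out-neighbors of $z$ might both have $z$ as their rightmost-ending in-neighbor. I expect the resolution to combine: (i) the in-degree-one condition on $z$, which rigidifies $z$'s left context; (ii) the positional straddling of $y(x_k)$ and $q(x_k)$ inside $z$ from Lemma~\ref{lem:position_of_z}, which constrains where $z$ can sit inside any candidate descendant; and (iii) the fact that $z \in \MR(T')$ appears as a ``new'' in-degree-one node tightly linked to the edited position, sharply limiting the candidate descendants. If a direct inversion turns out to be delicate, a backup route is to enrich the invariant by pairing $x$ with the interval $[i_z^{(x)}, j_z^{(x)}]$ of $z$ inside $x$, prove injectivity of the enriched map, and then read off the lemma as stated.
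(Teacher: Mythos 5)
Your plan has a genuine gap that you yourself flag and do not resolve: the reverse-chain step is not well-defined. The forward rule in Definition~\ref{def:z_i} takes the current node $z_i$ and selects a unique in-neighbor $z_{i+1}$ (rightmost-ending, with a tie-break), so the downward chain is canonical. Inverting it would require selecting, among the \emph{out}-neighbors of $z$, the one for which $z$ is the rightmost-ending in-neighbor — and nothing in the CDAWG structure prevents several out-neighbors of $z$ from all having $z$ as their rightmost-ending in-neighbor. Your three ingredients do not close this: (i) in-degree one of $z$ constrains who points \emph{into} $z$, not which out-neighbor to ascend to; (ii) the straddling of Lemma~\ref{lem:position_of_z} is a condition on where $z$ sits \emph{inside a fixed descendant}, not a discriminator between two candidate descendants; (iii) ``tied to the edit'' is too soft to pin down an out-neighbor. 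Your proposed backup is not a backup for the stated claim: proving the enriched map $x \mapsto (z_l(x), [i_z^{(x)},j_z^{(x)}])$ injective does not yield injectivity of $x \mapsto z_l(x)$ — the enriched map can be injective precisely in the troublesome case where $z_l(x_1)=z_l(x_2)$ but $z$ sits at different relative positions inside $x_1$ and $x_2$.

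The paper's proof avoids inversion entirely and, crucially, derives its contradiction in $\CDAWG(T)$ rather than $\CDAWG(T')$, which is where the in-degree-one hypothesis on $x$ actually lives. Assuming $z_l(x_1)=z_l(x_2)$ with $x_1\neq x_2$, it extracts a common substring $x'\in\MR(T)$ of $x_1$ and $x_2$ covering $z$; by Lemma~\ref{lem:position_of_z} this $x'$ still straddles $i_q^{(x)}$ and $j_y^{(x)}$ inside $x$, which forces $\rrep_T(x')$ to be a suffix of $x$ longer than $q$ and hence gives $x'$ an out-edge to $x$ in $\CDAWG(T)$. That is a second in-edge of $x$, contradicting that $y$ was the unique parent. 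The lesson is that the contradiction comes from exhibiting an extra maximal repeat of $T$ spanning the critical interval, not from trying to make the $T'$-side chain reversible.
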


\begin{proof}
    Suppose that $z_l(x_1) = z_l(x_2)$. 
    Let $x'$ be the maximal common substring of $x_1$ and $x_2$ which contains $z_l(x_1) = z_l(x_2)$.
    By the maximality of $x'$ as a common substring of the two occurrences in
    $T$, $x'$ is a maximal repeat in $T$.
    Also, let $y_1$ (resp. $q_1$) denote the longest proper prefix (resp. suffix) of $x_1$ that is in $\MR(T)$.
    Since $i_{z_l(x_1)}^{(x_1)} < i_{q_1}^{(x_1)} < j_{y_1}^{(x_1)} \leq j_{z_l(x_1)}^{(x_1)}$ by Lemma~\ref{lem:position_of_z} and $z_l(x_1)$ is not a suffix of $y_1$,
    $i_{x'}^{(x_1)} < i_{q_1}^{(x_1)} < j_{y_1}^{(x_1)} < j_{x'}^{(x_1)}$ also holds.
    This also implies that $x'\ne y_1$.
    Since $y_1$ is the only node that has an out-edge to $x_1$ in
$\CDAWG(T)$, and $q_1$ is the longest maximal repeat of $T$
that is a proper suffix of $x_1$, we derive a contradiction as follows:

If the considered occurrence of $x'$ is a suffix of $x_1$, then $x'$
itself is a suffix of $x_1$ longer than $q_1$, contradicting the definition
of $q_1$. Otherwise, let $a\in\Sigma$ be the character immediately
following this occurrence of $x'$ in $x_1$, so that
$x'a\in \Substr(x_1)$. Then $\rrep_T(x'a)$ is a suffix
of $x_1$ that is longer than $q_1$. If $\rrep_T(x'a)\ne x_1$,
then this contradicts the definition of $q_1$. Otherwise,
$\rrep_T(x'a)=x_1$, and hence $x'$ has an outgoing edge to
$x_1$ in $\CDAWG(T)$, contradicting that $y_1$ is the unique in-neighbor of $x_1$ in $\CDAWG(T)$.
\end{proof}
By Lemma~\ref{lem:\zl_injective}, there is an injection that maps a new node $\zl$ in $\CDAWG(T')$ to an existing node $x$ in $\CDAWG(T)$, 
and these nodes have only a single in-edge.
Thus, the number of nodes whose in-degree is one does not change in Case 2(c)-1. 

\subsection{Case 2(c)-2: $\zl$ is a suffix of $y$}
Let $\Val$ be the set of strings $x$ in Case 2(c)-2, and $\Valy$ be the set of strings $y$ corresponding to $x \in \Val$.
If $x\in \Val$, $\zl$ is right-maximal and is not left-maximal in $T$ since $\zl$ is a suffix of $y$ and $\zl\notin \MR(T)$.
Then, $\zl \in \None$.
Let $P_q$ denote the overlap of $y$ and $q$. Since $y, q\in \MR(T)$, $P_q\in\MR(T)$.
For $P_q$, we have:

\begin{lemma}
    \label{lem:P_q}
    $P_q$ has also a crossing-occurrence in $T'$. Furthermore, $U(\zl)$ (the existing maximal repeat corresponding to $\zl$) is a suffix of $P_q$.
\end{lemma}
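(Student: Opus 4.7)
The strategy is to work inside a single crossing occurrence of $x$ in $T'$, show that the edited position is forced into the overlap window $P_q = y \cap q$, and then use the fact that $\zl$ is a suffix of $y$ (Case (II)) together with the left-maximality of $P_q$ in $T$ to pinpoint $U(\zl)$. Since $x \in \Val(T) \subseteq \VC(T)$ has in-degree one in $\CDAWG(T)$ but at least two in $\CDAWG(T')$, the new in-edge(s) to $x$ must have been created by the edit, so $x$ has at least one crossing occurrence $x^*$ in $T'$. I would fix such an $x^*$ and, via Lemma \ref{lem:position_of_z}, embed $y$, $q$, and the chain $z_0, \ldots, \zl$ inside $x^*$ so that $i_{\zl}^{(x)} < i_q^{(x)} < j_y^{(x)} \leq j_{\zl}^{(x)}$ and $j_q^{(x)} = |x|$; the overlap is then $P_q = x[i_q^{(x)}..j_y^{(x)}]$.

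For the first conclusion, I would show that the edit position, expressed in the local coordinates of $x^*$, lies in $[i_q^{(x)}, j_y^{(x)}]$, from which it follows immediately that the $P_q$-window inside $x^*$ is a crossing occurrence. The proof proceeds by two complementary contradictions. If the local edit position were strictly less than $i_q^{(x)}$, the suffix $x^*[i_q^{(x)}..|x^*|]$ would be unchanged from $T$; combined with the new parent $z_0$ whose occurrence in $x^*$ ends past $j_y^{(x)}$ and extends via an unchanged tail to $x$, this would already produce a $(z_0, x)$ edge in $\CDAWG(T)$, contradicting the uniqueness of $y$ as the parent of $x$. Symmetrically, if the local edit position were strictly greater than $j_y^{(x)}$, the occurrence of $y$ in $x^*$ and hence its suffix $\zl$ would be untouched, forcing $\zl \in \MR(T)$ from the inherited $T$-context and contradicting $\zl \in \N$.

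For the second conclusion, by Case (II) the occurrence of $\zl$ inside $x^*$ is a suffix of $y$'s occurrence, hence a suffix of the $P_q$-window. Considering the crossing occurrence of $\zl$ that defines $U(\zl)$ and its right part $S_{\zl_G}$, I would extend $S_{\zl_G}$ leftward in $T$ until left-maximality. Because $P_q$ is left-maximal in $T$ (as a prefix of the maximal repeat $q$, it inherits two distinct left extensions from occurrences of $q$), the extension reaches at least to the left end of (some suffix of) $P_q$; conversely, the $T$-context immediately to the left of $P_q$'s position inside $x$ differs from the left contexts of $P_q$'s other occurrences in $T$, so the extension cannot continue past the left boundary of $P_q$. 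Hence $U(\zl)$ is exactly a suffix of $P_q$.

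The main obstacle is step two: pinning the edit into the narrow overlap interval $[i_q^{(x)}, j_y^{(x)}]$. Each side uses a different structural invariant — uniqueness of $y$ as $x$'s $\CDAWG(T)$-parent blocks the edit on the left, while novelty of $\zl$ in $\MR(T')$ blocks it on the right — and both contradictions must be executed carefully across insertion, deletion, and substitution cases (in particular the definition of crossing occurrence differs slightly), while consistently respecting the Case (II) hypothesis that $\zl$ is a suffix of $y$.
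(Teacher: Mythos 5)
The proposal takes a genuinely different route (pinning the edit position inside a crossing occurrence of $x$) from the paper's proof (which assumes $P_q$ has only non-crossing occurrences, deduces that $U(\zl)$ then has $P_q$ as a \emph{proper} suffix, and derives from $U(\zl)S_q \EqrL_T x$ a second in-edge of $x$ in $\CDAWG(T)$, contradicting that $x$ has in-degree one). Unfortunately the proposal's route has several genuine gaps. First, the opening assertion that ``$x$ has at least one crossing occurrence $x^*$ in $T'$'' is not justified: $x$ can acquire a new in-edge in $\CDAWG(T')$ because a \emph{new} node $z_0 \in \N$ becomes a parent of $x$, without $x$ itself gaining any crossing occurrence, so the whole subsequent positional analysis inside $x^*$ lacks a starting point. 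Second, the ``left'' contradiction invokes a $(z_0, x)$ edge in $\CDAWG(T)$, but $z_0$ need not be a node of $\CDAWG(T)$ at all; it may be a freshly created maximal repeat of $T'$, and even if it survives in $T$ the occurrence of $z_0$ in $x^*$ straddles positions to the left of $i_q^{(x)}$, so it is not contained in the unchanged tail. Third, and most seriously, the ``right'' contradiction is invalid: from the fact that the particular occurrence of $\zl$ inside $x^*$ is untouched by the edit one cannot conclude $\zl \in \MR(T)$. The set $\None$ is defined globally — $\zl$ is right-maximal in $T$ but fails to be left-maximal in $T$, and the edit can create a new left-context for $\zl$ at a completely different position of $T'$; the existence of one unaffected occurrence of $\zl$ says nothing about its (left-)maximality in $T$.

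There is also a directional confusion in the third step: by Lemma~\ref{lem:position_of_z} we have $i_{\zl}^{(x)} < i_q^{(x)}$ and $j_y^{(x)} \le j_{\zl}^{(x)}$, so $P_q$ (the overlap of $y$ and $q$, occupying $x[i_q^{(x)}..j_y^{(x)}]$) is a \emph{proper suffix} of $\zl$, not the other way around as the proposal states (``$\zl$ ... is a suffix of the $P_q$-window''). The paper exploits exactly this: if $P_q$ had no crossing occurrence, the part $S_{\zl_G}$ of $\zl$ to the right of the edit would already contain $P_q$ and strictly more, forcing $U(\zl)$ to strictly extend $P_q$ to the left, and then $U(\zl)S_q$ (where $q = P_q S_q$) yields a second incoming path to $x$ in $\CDAWG(T)$ — contradiction. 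The same argument also kills $|U(\zl)| > |P_q|$ directly, which is how the paper obtains the second conclusion, with no separate left-maximality analysis needed. I would recommend re-proving both halves along the paper's lines rather than trying to rescue the positional-pinning argument.
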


\begin{proof}
    Suppose $P_q$ has only non-crossing occurrence(s) in $T'$.
    See Fig.~\ref{fig:P_q_cross}.
    $\zl$ has a crossing occurrence in $T'$ since $\zl \in \None$.
    Since $P_q$ has only non-crossing occurrence(s) in $T'$,
    $U(\zl)$ is a maximal repeat in $T$ which has $P_q$ as its proper suffix.
    Let $S_q$ be the string such that $q = P_qS_q$. Since $U(\zl)S_q$ is a suffix of $x$ longer than $q$, $U(\zl)S_q \EqrL_T x$ holds.
    Since $U(\zl)$ is also a suffix of $y$, there exists a path from $U(\zl)$ to $x$ which does not contain $y$ in $\CDAWG(T)$.
    This implies that $x$ has another in-edge from some node other than $y$ in $\CDAWG(T)$ before edit, which is a contradiction.
    By the above arguments, if $U(\zl)$ is longer than $P_q$, then it also leads to a contradiction.
    Since $U(\zl)$ and $P_q$ are suffixes of $y$, $U(\zl)$ is also a suffix of $P_q$.
\end{proof}

\begin{figure}[t]
    \centering
    \includegraphics[keepaspectratio,scale=0.4]{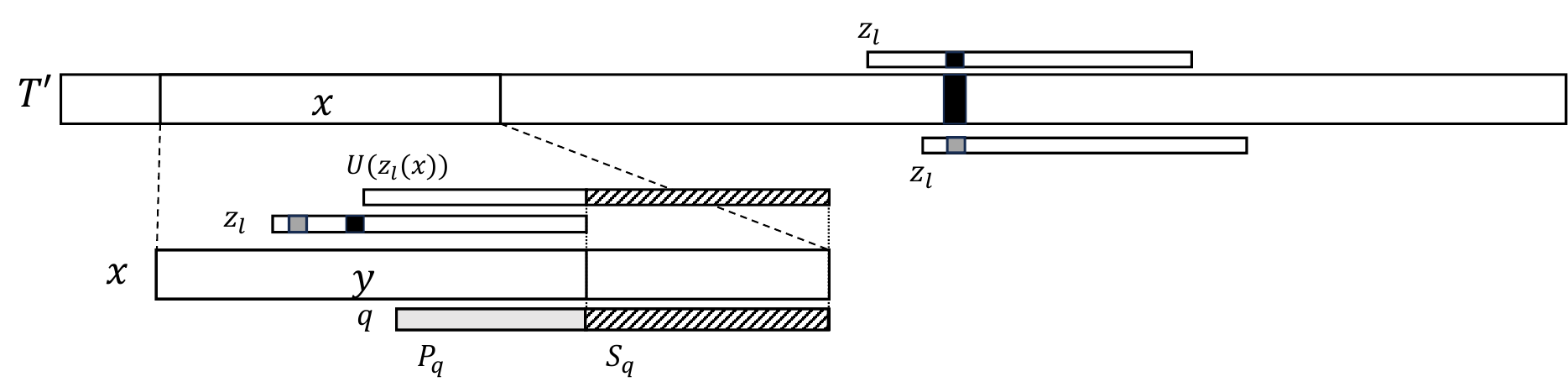}
    \caption{Illustration for the proof of Lemma~\ref{lem:P_q}. The black position in $T'$ is the edited position. 
    $P_q$ has a non-crossing occurrence which does not contain nor touch the edited position.}
    \label{fig:P_q_cross}
\end{figure}

Let $P_q(x)$ be a corresponding $P_q$ to $x\in\Val$.
For the correspondence between $y$ and $\zl, P_q$, we have the following lemma:



\begin{lemma}
    \label{lem:injective_y}
    For each $y_1, y_2\in V^{(y)}_{2(c)-2}$, choose an arbitrary corresponding $x_1, x_2\in V_{2(c)-2}$.  
    If $y_1 \neq y_2$, then $z_l(x_1) \neq z_l(x_2)$ and $P_q(x_1) \neq P_q(x_2)$.
\end{lemma}

\begin{proof}
    Assume $z_l(x_1) = z_l(x_2)$.
    Since $z_l(x_1)$ occurs both in $y_1$ and in $y_2$ as their suffix, there exists a maximal repeat of $T$ in $z_l(x_1)$ or in a suffix of $y_1$ that is longer than $z_l(x_1)$. 
    This implies that there exists a maximal repeat of $T$ that is longer than $P_q(x_1)$ (similar to Lemma~\ref{lem:P_q}), which is a contradiction.
    Thus, $z_l(x_1) \neq z_l(x_2)$.
    Assume $P_q(x_1)=P_q(x_2)$. Since this string occurs as a suffix of both $y_1$ and $y_2$, the same argument as above implies that there is a maximal repeat of $T$ contained in $P_q(x_1)$ or in a suffix of $y_1$
    that is longer than $P_q(x_1)$, which is a contradiction.
\end{proof}

By Lemma~\ref{lem:injective_y}, there exists an injective mapping from $y$ to $\zl$ and $P_q$. 

We further show the value of $|\Valy|$ by using $\zl$.
Since $\zl$ is a suffix of $y\in\MR(T)$, $\zl \in \None$, which implies $\D_{T'}(\zl) \leq \D_T(U(\zl)) + 2$ due to Lemma~\ref{lem:N1edge}.



Let $\Valz$ denote the set of $\zl$ where $x\in \Val$.
Let $\Vdelta(\zl)$ denote the value such that $\Vdelta(\zl) = \max(\D_{T'}(\zl) -\D_T(U(\zl)), 0)$, 
and $x_y$ denote the arbitrary chosen $x$ corresponding to each $y \in \Valy$.
Also, the following lemma holds:



\begin{lemma}
    \label{lem:size_N2}
    $|\Ntwo| \leq |\M(T)| - \sum_{y\in \Valy}\Vdelta(\zly)$.
\end{lemma}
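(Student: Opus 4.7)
The plan is to prove $|\Ntwo| + |\Valzo| + 2|\Valzt| \leq |\M(T)|$ by constructing a single injection from the disjoint union $\Ntwo \sqcup \Valzo \sqcup (\Valzt \times \{1,2\})$ into $\M(T)$. First I dispatch $\Ntwo$: every $x \in \Ntwo$ lies in $\LeftM(T) \setminus \MR(T)$, so $x$ is not right-maximal in $T$, and $x$ can only join $\M(T')$ by acquiring a brand new right-extension character through a crossing occurrence. Choosing one such new right-extension per $x$ and composing with the injection of Lemma~\ref{lem:size_L} yields an injection $\eta : \Ntwo \to \M(T)$.

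Next I dispatch $\Valzo$ and $\Valzt$. Because each $\zl \in \None$ is right-maximal but not left-maximal in $T$, every occurrence of $\zl$ in $T$ extends uniquely to the left until reaching $U(\zl)$; hence $\zl$ and $U(\zl)$ share the same set of right-extension characters in $T$. Therefore the gap $\D_{T'}(\zl) - \D_T(U(\zl))$—which equals $2$ for $\zl \in \Valzt$ and $1$ for $\zl \in \Valzo$—exactly counts the characters $a$ with $\zl a \in \Substr(T') \setminus \Substr(T)$, i.e., the new right-extensions of $\zl$. Each such $\zl a$ arises at a crossing occurrence, so the structural argument behind Lemma~\ref{lem:size_L} (applied now with $\zl \in \None$ in place of an element of $\Ntwo \cup \Q$) produces a mapping $\mu$ that sends each $\zl \in \Valzo$ to one element of $\M(T)$ and each $\zl \in \Valzt$ to two distinct elements of $\M(T)$.

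The hard part is to show that $\eta \sqcup \mu$ is genuinely injective, which I handle in two steps. Injectivity of $\mu$ on $\Valzo \cup \Valzt$ reduces to $U$ being injective on $\None \cup \Nbase$ (Lemma~\ref{lem:N1edge}) together with the fact that the two new extensions attached to a single $\zl \in \Valzt$ use distinct first characters. The delicate point is disjointness between $\eta(\Ntwo)$ and the image of $\mu$: strings in $\Ntwo$ are already left-maximal in $T$, whereas the $\zl$'s are not, so tracing the Lemma~\ref{lem:size_L} association for their respective new right-extensions routes them to structurally different maximal repeats of $T$ (those tied to the left context of the crossing occurrence itself versus those tied to $U(\zl)$). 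Once this disjointness is settled, $\eta$ and $\mu$ combine into an injection into $\M(T)$, which rearranges to the claimed bound $|\Ntwo| \leq |\M(T)| - (|\Valzo| + 2|\Valzt|)$.
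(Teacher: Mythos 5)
Your high-level strategy (an injection from $\Ntwo \sqcup \Valzo \sqcup (\Valzt \times \{1,2\})$ into $\M(T)$, using Lemma~\ref{lem:size_L}) matches the paper's, and the $\Ntwo$ part is handled the same way. The gap is in how you handle $\Valzo$ and $\Valzt$, where you work with $\zl$ directly; the paper instead passes to $U(\zl)$, and this substitution is not a cosmetic choice — it is what makes the argument go through.

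First, your claim that ``$\zl$ and $U(\zl)$ share the same set of right-extension characters in $T$'' does not follow from the stated reasoning. You argue that every occurrence of $\zl$ in $T$ extends uniquely leftward to $U(\zl)$, implicitly taking $\zl$ to be a suffix of $U(\zl)$ with the same ending-position set. But $U(\zl)$ is defined by left-extending $S_{\zl,G}$, not $\zl$, and in the situation of Case~2(c)(II) the paper establishes the opposite containment: $U(\zl)$ is a \emph{proper suffix} of $\zl$ (both being suffixes of $y$, with $|U(\zl)| \leq |P_q| < |\zl|$). Since $U(\zl)$ is left-maximal and $\zl$ is not, $\zl$ and $U(\zl)$ are in different $\EqrL_T$-classes, so equality of right-extension sets in $T$ is exactly what one should \emph{not} expect; one only gets $\mathsf{rext}_T(\zl) \subseteq \mathsf{rext}_T(U(\zl))$, which is the inequality $\D_{T'}(U(\zl)) \geq \D_{T'}(\zl)$ that the paper actually uses. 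Your proposed equality is stronger than what holds and is not needed.

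Second, and more fundamentally, you apply ``the structural argument behind Lemma~\ref{lem:size_L}'' to $\zl \in \None$, explicitly noting that $\zl$ is outside the domain $\Ntwo \cup \Q$ of that lemma. This is not a minor bookkeeping issue: extending the injection of Lemma~\ref{lem:size_L} to include new right-extensions of $\None$-strings is an unproved (and nontrivial) claim, and your proposed ``disjointness'' argument between $\eta(\Ntwo)$ and the image of $\mu$ is a sketch of precisely the content that would need to be verified if one went this route. The paper sidesteps the whole problem: from $U(\zl)$ a suffix of $\zl$ and $\D_{T'}(\zl) = \D_T(U(\zl)) + k$ it deduces $\D_{T'}(U(\zl)) \geq \D_T(U(\zl)) + k$, so the node $U(\zl) \in \Qnew \subseteq \Q$ itself has at least $k$ new right-extensions, and Lemma~\ref{lem:size_L} applies verbatim, on its stated domain, to those together with the new right-extensions of the $\Ntwo$-strings. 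Distinctness of the $U(\zl)$'s and of the images then comes for free from Lemma~\ref{lem:N1edge} and the injectivity in Lemma~\ref{lem:size_L}. To repair your proof you would need to replace the role of $\zl$ by $U(\zl)$ throughout the second paragraph; as written the argument has a real hole.
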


\begin{proof}
    For each $y\in \Valy$, let $q=q(x_y)$ and $P_q=P_q(x_y)$.
    Since $q,y\in Q$ and $P_q$ is their overlap in the occurrence inside
    $x_y$, we have that $P_q$ is left-maximal and right-maximal in both $T$ and $T'$.
    Hence, $P_q \in \Q$.
    Also, by Lemma~\ref{lem:P_q}, $U(z_l(x_y))$ is a suffix of $P_q(x_y)$, and by
    Lemma~\ref{lem:position_of_z} and the definition of Case 2(c)-2, $P_q(x_y)$ is a suffix of $z_l(x_y)$.
    
    Suppose $\Vdelta(\zly) = 2$.
    Since $U(\zly)$ is a suffix of $P_q(x_y)$ and $P_q(x_y)$ is a suffix of $\zly$, $\D_{T'}(P_q(x_y)) \geq \D_T(U(\zly)) + 2 \geq \D_T(P_q(x_y)) +2$ holds.
    This implies that $P_q(x_y)\in Q$ has at least two new right-extensions
in $T'$. By Lemma~\ref{lem:size_L}, these new right-extensions are injectively mapped
to two distinct maximal repeats in $\M(T)$.
    Similarly, for each $\zly$ where $\Vdelta(\zly) = 1$, $\D_{T'}(P_q(x_y)) \geq \D_T(P_q(x_y)) + 1$ holds,
    and $P_q(x_y)\in Q$ has at least one new right-extension in $T'$, which
is injectively mapped to a maximal repeat in $\M(T)$ by Lemma~\ref{lem:size_L}.
    By Lemma~\ref{lem:injective_y}, the strings $P_q(x_y)$ for distinct $y\in \Valy$ are pairwise distinct.
    Then, at least $\sum_{y\in \Valy}\Vdelta(\zly)$ new right-extensions of these strings $P_q(x_y)\in \Q$ are injectively
mapped to distinct maximal repeats in $\M(T)$.
    On the other hand, each string $s\in \Ntwo$ has at least one new
    right-extension in $T'$, since $s\in \M(T')$ but $s\notin \RightM(T)$.
    By Lemma~\ref{lem:size_L}, these new right-extensions are also injectively mapped to
    distinct maximal repeats in $\M(T)$.
    Applying the injection of Lemma~\ref{lem:size_L} to the union of the above new right-extensions of strings in $\Ntwo \cup \Q$, we obtain
    $|\Ntwo|+\sum_{y\in \Valy}\Vdelta(z_l(x_y)) \le |\M(T)|$.
\end{proof}

Also, $\zl \in \None$ and $\Ndelta(\zl) = 2-\max(\D_{T'}(\zl) -\D_T(U(\zl)), 0)$, we have that:

\begin{corollary}
\label{cor:size_delta}
For every $x\in \Val$, $\Ndelta(\zl)+\Vdelta(\zl)=2.$
\end{corollary}

Also, the chosen strings $\zly$ are pairwise distinct elements of $\Valz$ from Lemma~\ref{lem:injective_y}.
Hence, we also have the following:

\begin{corollary}
\label{cor:size_VY}
$|\None| \geq |\Valz| \geq |\Valy|$ holds.
\end{corollary}

To consider the size of $|\Val|$, let $\Sy = \sum_{y\in\Valy}(\D_T(y))$, and $\Xy$ be the total number of in-edges of $x\in\Val$ in $\CDAWG(T)$.
See the upper diagram (a) of Fig.~\ref{fig:ytox}.
Since $x\in\Val$ has only one in-edge $(y,x)$ where $y\in\Valy$ in $\CDAWG(T)$, $|\Val| = \Xy \leq \Sy$ holds.

\begin{figure}[t]
    \centering
    \includegraphics[keepaspectratio,scale=0.48]{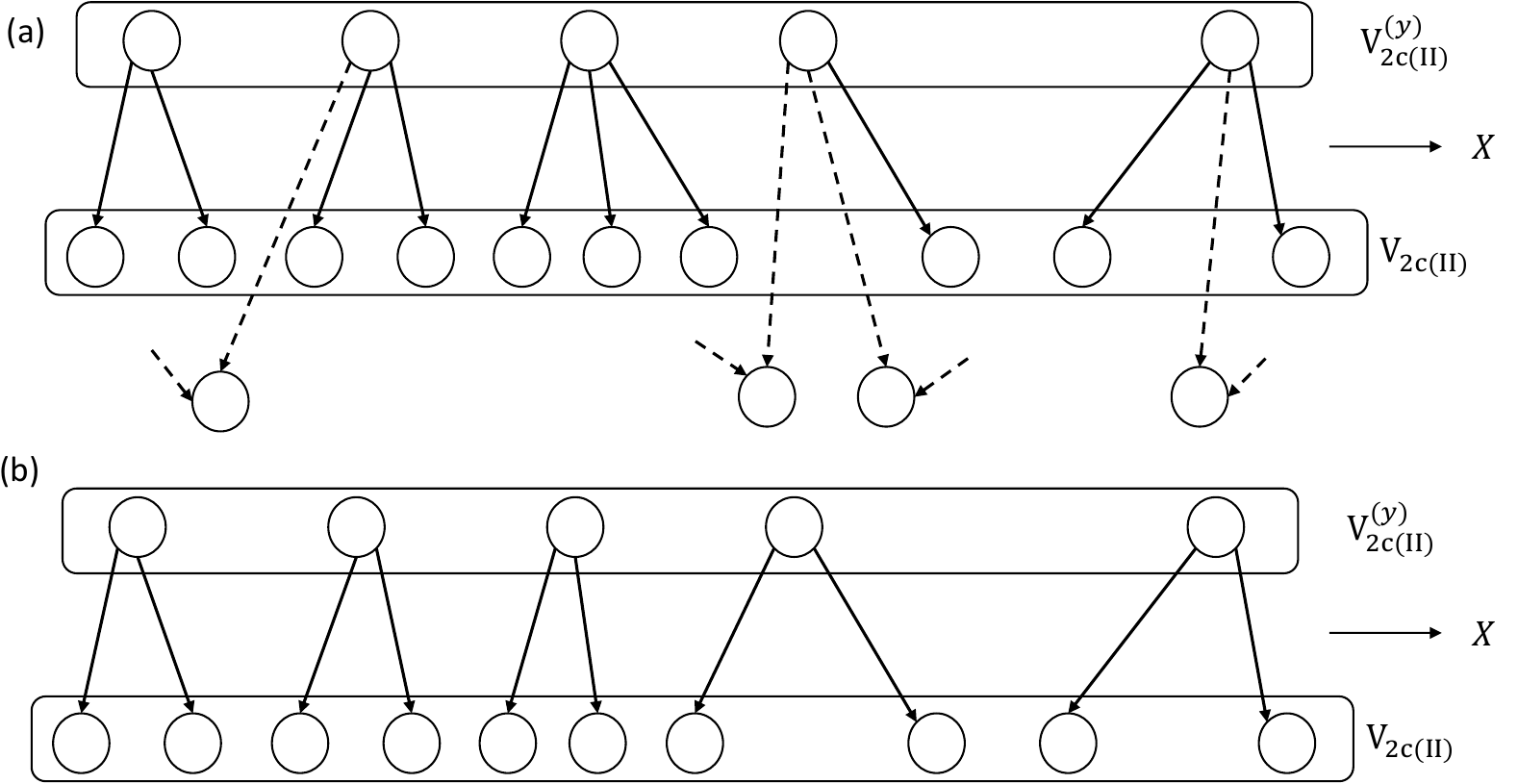}
    \caption{(a) Illustration for the definitions of $\Valy$, $\Xy$ and $\Sy$. $\Sy$ counts all of the out-edges from $y\in \Valy$,
    although $\Xy$ only counts the edges going to $x\in \Val$ (only solid-line edges are counted). 
    (b) The situation of (a) in which $\Gsize(T')-\Gsize(T)$ is maximized.}
    \label{fig:ytox}
\end{figure}

Now, we put all these results (Cases 1, 2(a), 2(b), 2(c)-1, and 2(c)-2) together.
Since $\G(T) = |\VO| + |\VA| + |\VB| + |\Vbt(T)| + |\Val|$ and
at least $|\Vbt(T)|$ new nodes have only one in-edge in Case 2(c)-1, 
we have the following:

\begin{lemma}
    \label{lem:arb_vone}
    $\vone(T) - \vone(T') \leq |\VO| + |\VA| + |\VB| + \Xy$ holds.
\end{lemma}


Finally, we combine the results of Theorem~\ref{lem:arb_edges}, Lemmas~\ref{lem:size_eprime},~\ref{lem:size_N2} and~\ref{lem:arb_vone}, and Corollaries~\ref{cor:size_delta} and~\ref{cor:size_VY}.
Then, we have:
\[
\begin{array}{l}
\Gsize(T') - \Gsize(T)\\
 =   (\size(T') - \size(T)) + (\vone(T) -  \vone(T'))\\
 \leq  \remedge + 5|\MR(T)| - \sum_{x \in \None \cup \Nbase}\Ndelta(x) + |\Ntwo| + 3 + |\VO| + |\VA| + |\VB| + \Xy\\
 \leq \size(T) + 6|\MR(T)| - \sum_{x \in \None \cup \Nbase}\Ndelta(x) - \sum_{y\in\Valy}\Vdelta(\zly) + \Xy + 4\\
\leq \size(T) + 6|\MR(T)| - \sum_{y \in \Valy}(\Ndelta(\zly) + \Vdelta(\zly)) + \Xy + 4\\
\leq \size(T) + 6|\MR(T)| + \Xy - 2|\Valy| + 4.
\end{array}
\]
We call this Inequality ($\alpha$).


Now we consider the value of $\Xy$. Let $\Snoty = \sum_{y\in\MR(T)\setminus\Valy}(\D_T(y))$.
For $\Sy$, $|\MR(T)|$ and $\size(T)$, the following holds:
\begin{lemma}
    \label{lem:value_MT}
    Let $\omega = \Xy - 2|\Valy|$, then $|\MR(T)| \leq \min(\size(T), \size(T)-\omega)/2$.
\end{lemma}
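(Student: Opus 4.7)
The plan is to prove the two upper bounds inside the $\min$ separately, both through an elementary double-counting of out-degrees in $\CDAWG(T)$ enabled by Fact~\ref{fact:outdegree}. For the bound $|\MR(T)| \leq \size(T)/2$, I use that each $x \in \MR(T)$ corresponds to a non-source node of $\CDAWG(T)$ with $\D_T(x) \geq 2$. Since the out-degrees sum to $\size(T)$ over all nodes, summing only over $\MR(T)$ gives
\[
  2|\MR(T)| \;\leq\; \sum_{x \in \MR(T)} \D_T(x) \;\leq\; \size(T).
\]

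For the bound $|\MR(T)| \leq (\size(T)-\omega)/2$, I would partition $\MR(T)$ into $\Valy(T)$ and $\MR(T) \setminus \Valy(T)$; the inclusion $\Valy(T) \subseteq \MR(T)$ holds because each $y \in \Valy(T)$ is the unique parent in $\CDAWG(T)$ of some $x \in \Val(T)$, hence a non-source node of $\CDAWG(T)$. Applying Fact~\ref{fact:outdegree} only on the complement yields $\Snoty \geq 2(|\MR(T)|-|\Valy(T)|)$. Combining this with $\Sy + \Snoty \leq \size(T)$ and the already-established $\Xy \leq \Sy$ (stated just before the lemma), I chain
\[
  \size(T) \;\geq\; \Sy + \Snoty \;\geq\; \Xy + 2(|\MR(T)|-|\Valy(T)|),
\]
which rearranges to $2|\MR(T)| \leq \size(T) - \Xy + 2|\Valy(T)| = \size(T) - \omega$. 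Taking the minimum of the two bounds yields the claim.

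The argument is essentially bookkeeping on top of Fact~\ref{fact:outdegree}; the only point that needs care is checking that the out-edges counted in $\Sy$ and in $\Snoty$ are disjoint, which follows directly from the disjointness of $\Valy(T)$ and $\MR(T)\setminus\Valy(T)$ as node sets of $\CDAWG(T)$. I do not anticipate any substantive obstacle here.
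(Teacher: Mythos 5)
Your proof is correct and takes essentially the same approach as the paper: both halves of the $\min$ follow by double-counting out-degrees via Fact~\ref{fact:outdegree}, splitting $\size(T) = \Sy + \Snoty$ and using $\Xy \leq \Sy$. (Incidentally, you supply the factor of $2$ in $\Snoty \geq 2|\MR(T)\setminus\Valy(T)|$ that the paper's written proof omits by typo but clearly uses in the next line.)
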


\begin{proof}
If $\omega \leq 0$, then $\size(T) \geq 2|\MR(T)|$ due to Fact~\ref{fact:outdegree}.
If $\omega > 0$, then $\Sy \geq \Xy = 2|\Valy| + \omega$ holds. Since $\Sy +\Snoty = \size(T)$ and $\Snoty \geq 2|\MR(T)\setminus\Valy|$,
$\size(T) \geq 2|\MR(T)| + \omega$. Then $|\MR(T)| \leq (\size(T)-\omega)/2$.
\end{proof}

By substituting the value of Lemma~\ref{lem:value_MT} to Inequality ($\alpha$) above,
we get $\Gsize(T') - \Gsize(T) \leq 4\size(T) + \omega + 4$ if $\omega \leq 0$,
or $\Gsize(T') - \Gsize(T) \leq 4\size(T) - 2\omega + 4$ if $\omega > 0$.
Therefore, $\Gsize(T') - \Gsize(T) \leq 4\size(T) + 4$. 
Fig.~\ref{fig:ytox}(b) illustrates the situation suggested by the above bound.
As a result, we have the following theorem:

\begin{theorem}\label{theo:main}
    The additive sensitivity of the CDAWG-grammar is at most $4\size + 4$.
\end{theorem}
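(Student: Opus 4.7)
The plan is to chain the results established earlier: Theorem~\ref{lem:arb_edges} controls $\size(T')-\size(T)$, Lemma~\ref{lem:arb_vone} controls $\vone(T)-\vone(T')$, Lemmas~\ref{lem:size_VY} and~\ref{lem:size_N2} tie $|\Valy(T)|$ and $|\Ntwo|$ to the subsets of $\None\cup\Nbase$, and Lemma~\ref{lem:value_MT} bounds $|\MR(T)|$ in terms of $\omega := \Xy-2|\Valy(T)|$. With these in hand the argument reduces to algebra plus a two-case split.

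First I would write
\[
\Gsize(T')-\Gsize(T) = (\size(T')-\size(T)) - (\vone(T')-\vone(T))
\]
and combine Theorem~\ref{lem:arb_edges} with Lemma~\ref{lem:arb_vone}. Because the ``removed existing edges'' tokens $|V_1(T)|+|\VA(T)|+|\VB(T)|$ appear on both sides with opposite signs, they cancel, yielding
\[
\Gsize(T')-\Gsize(T) \leq \size(T) + 5|\MR(T)| - |\Nmid| - 2|\Nless| + |\Ntwo| + 3 + \Xy.
\]
Next I would substitute $|\Ntwo|\leq|\M(T)|-|\Valzo|-2|\Valzt|$ from Lemma~\ref{lem:size_N2}, use the containments $\Valzt\subseteq\Nmax$, $\Valzo\subseteq\Nmid$, $\Valzz\subseteq\Nless$ from Lemma~\ref{lem:size_VY}, apply $|\M(T)|=|\MR(T)|+1$, and group the negative terms via $-2(|\Valzt|+|\Valzo|+|\Valzz|)\leq -2|\Valy(T)|$, reproducing inequality~(\ref{eqn:arb}):
\[
\Gsize(T')-\Gsize(T) \leq \size(T) + 6|\MR(T)| + \Xy - 2|\Valy(T)| + 4.
\]

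The final step is a case analysis on the sign of $\omega = \Xy-2|\Valy(T)|$ using Lemma~\ref{lem:value_MT}. When $\omega\leq 0$, the lemma gives $|\MR(T)|\leq\size(T)/2$, so $6|\MR(T)|\leq 3\size(T)$ and the right-hand side is at most $4\size(T)+\omega+4\leq 4\size(T)+4$. When $\omega>0$, the lemma gives $|\MR(T)|\leq(\size(T)-\omega)/2$, so $6|\MR(T)|\leq 3\size(T)-3\omega$ and the right-hand side is at most $4\size(T)-2\omega+4<4\size(T)+4$. Both cases satisfy the claimed bound.

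The main obstacle is realizing that the single quantity $\omega$ simultaneously governs the surplus contribution of $\Xy$ and the improvement in the bound on $|\MR(T)|$. This is the bookkeeping $\size(T)=\Sy+\Snoty\geq\Xy+2|\MR(T)\setminus\Valy(T)|\geq 2|\MR(T)|+\omega$, invoking Fact~\ref{fact:outdegree} for the second summand, which is exactly Lemma~\ref{lem:value_MT}. The two cases consequently move in opposite directions and collapse to a common worst case at $\omega=0$, the configuration in Figure~\ref{fig:ytox}(b) where each $y\in\Valy(T)$ has precisely two outgoing edges pointing to distinct members of $\Val(T)$.
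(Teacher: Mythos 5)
Your proof is correct and follows the paper's own derivation step by step: the cancellation of $|V_1(T)|+|\VA(T)|+|\VB(T)|$ between the edge bound and the in-degree bound, the substitution via Lemmas~\ref{lem:size_VY} and~\ref{lem:size_N2} to arrive at inequality~(\ref{eqn:arb}), and the two-case split on $\omega = \Xy - 2|\Valy(T)|$ using Lemma~\ref{lem:value_MT}. The arithmetic in both branches ($4\size(T)+\omega+4$ for $\omega\leq 0$ and $4\size(T)-2\omega+4$ for $\omega>0$) matches the paper exactly, so this is essentially the same argument.
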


Since $\Gsize(T) \geq \size(T)/2$, Theorem~\ref{theo:main} immediately leads to the following:

\begin{corollary}
    \label{coro:main}
    The additive sensitivity of the CDAWG-grammar is at most $8\Gsize + 4$.
\end{corollary}

\section{Conclusions}

In this paper, we investigated the additive sensitivity of CDAWG-grammars under a
single-character edit.  We showed that there exists a family of strings whose
CDAWG-grammar size increases by $3\Gsize-21$ after a single-character edit, yielding a lower
bound of $3\Gsize-21$.  We also proved that, for every string, CDAWG-grammar size can increase
at most $4\size + 4$ after any single-character edit.
Together with $\Gsize\ge \size/2$, this gives the upper bound of
$8\Gsize+4$.

The main remaining open problem is to make these bounds tighter. A more detailed
understanding of the change in in-degree-one nodes may be useful for closing this gap.

\bibliographystyle{abbrv}
\bibliography{ref}

\end{document}